\def\BibTeX{{\rm B\kern-.05em{\sc i\kern-.025em b}\kern-.08em
    T\kern-.1667em\lower.7ex\hbox{E}\kern-.125emX}}
\newtheorem{thm}{Theorem}[section]
\newtheorem{cor}[thm]{Corollary}
\newtheorem{lem}[thm]{Lemma}
\newtheorem{expl}[thm]{Example}
\newtheorem{dfn}[thm]{Definition}
\newtheorem{rem}[thm]{Remark}
\newtheorem{hyp}[thm]{Hypothesis}
\newcommand{\del}{\mathrm{del}}
\newcommand{\ins}{\mathrm{ins}}
\newcommand{\rev}{\mathrm{rev}}
\newcommand{\BAD}{\mathrm{del_{BA}}}
\newcommand{\BAR}{\mathrm{rev_{BA}}}
\newcommand{\mono}{M_{a,m,{\bf k}}(n)}
\newcommand{\azinv}{A_{a,m}(n)}
\newcommand{\binset}{\mathbb{B}}
\newcommand{\Lzero}{L^{(0)}_{\bf k}(i,{\bf y})}
\newcommand{\Lone}{L^{(1)}_{\bf k}(i,{\bf y})}
\newcommand{\Rzero}{R^{(0)}_{\bf k}(i,{\bf y})}
\newcommand{\Rone}{R^{(1)}_{\bf k}(i,{\bf y})}
\newcommand{\monoweight}{\mathrm{wt}_{\bf k}({\bf y})}
\begin{document}

\title{Decoding algorithms of monotone codes and azinv codes and their unified view
}

\author{\IEEEauthorblockN{Hokuto Takahashi}
\IEEEauthorblockA{Graduate School of Science and Engineering, \\
Chiba University \\
1-33 Yayoi-cho, Inage-ku, 
Chiba-city, Chiba, Japan\\ 
263-0022\\
Email: ayca5495@chiba-u.jp}
\and
\IEEEauthorblockN{Manabu Hagiwara}
\IEEEauthorblockA{Graduate School of Science,\\
Chiba University\\
1-33 Yayoi-cho, Inage-ku, Chiba-city,
 Chiba, Japan\\ 
 263-0022\\
Email: hagiwara@math.s.chiba-u.ac.jp}

}

\maketitle

\begin{abstract}
This paper investigates linear-time decoding algorithms for two classes of error-correcting codes. One of the classes is monotone codes which are known as single deletion codes. The other is azinv codes which are known as single balanced adjacent deletion codes. As results, this paper proposes generalizations of Levenshtein's decoding algorithm for Levenshtein's single deletion codes. This paper points out that it is possible to unify our new two decoding algorithms.
\end{abstract}

\section{Introduction}
Insertion errors and deletion errors are considered to be
synchronization errors over communication channels and storage channels
such as DNA-data based storages \cite{buschmann2013levenshtein,gabrys2017codes}, 
racetrack memories\cite{chee2018coding,sima2019correcting}, 
or Bit Patterned Media\cite{inoue2011deletion,krishnan2011coding}.
The study of  deletion error-correcting codes started with Levenshtein's work\cite{levenshtein1966binary},
where he  proved that Varshamov-Tenengolts (VT) codes 
are capable of correcting single insertions or deletions.
In Levenshtein's proof, he showed beautiful decoding algorithm 
to correct single deletions.
This paper provides two generalizations of Levenshtein's
decoding algorithms. 
One corrects single deletions and reversals for monotone codes, 
the other corrects single balanced adjacent deletions and 
single balanced adjacent reversals for azinv codes.
A single Balanced Adjacent Deletion (BAD) error is a double deletion error that 
deletes two different consecutive binary symbols, i.e., 01 or 10. 
A single Balanced Adjacent Reversal (BAR) error is a swap error that
reverses two different consecutive binary symbols.

For applications, 
the computational cost of decoding algorithms is preferable to be polynomial.
One of remarkable aspects of our algorithms is the computational cost.
The costs are only linear to their input, i.e., the length of received word.
Another remarkable aspect is the following. 
While these classes and their correctable error are different, 
our algorithms can be allowed to be unified.

This paper is organized as follows. In Section \ref{monoalgo}, we first introduce 
monotone codes and provide a single deletion/reveral error-correcting algorithm of the monotone codes. 
In Section \ref{azinvalgo}, we first introduce 
azinv codes and provide a single BAD/BAR error-correcting algorithm of the azinv codes.
In Section \ref{comparison}, we provide a unified view of the algorithms 
of the monotone codes and the azinv codes.

\section{Monotone Code and its Decoding Algorithm}\label{monoalgo}
Throughout this paper, $\mathbb{B}$ denotes the binary set $\{0,1\}$.
For a positive integer $n$, $[n]$ denotes $\left\{1,2,\cdots,n\right\}$. 

In this section, we introduce monotone codes and provide an algorithm to make the monotone codes 
single deletion/reversal error-correctable (Algorithm 1). 
Errors treated in this section are deletion errors and reversal errors, which are defined below.

\begin{dfn}[Deletion and Reversal]
Let $n$ be a positive integer and $i \in [n]$.
Define a map $\del_i:\mathbb{B}^n \rightarrow \mathbb{B}^{n-1}$ as
\begin{equation*}
\del_i(x_1x_2\cdots x_n)  \coloneqq
x_1\cdots x_{i-1}x_{i+1}\cdots x_n  .
\end{equation*}
We call the map $\del_i$  {\bf deletion}.
Note that $\binset^0 := \{ \varepsilon \}$, where $\varepsilon$ is the empty word. 

Define a map $\rev_i:\mathbb{B}^n \rightarrow \mathbb{B}^n $ as
\begin{equation*}
\rev_i(x_1x_2\cdots x_n)  \coloneqq
x_1\cdots x_{i-1}\overline{x}_{i}x_{i+1}\cdots x_n 
\end{equation*}
with $\overline{0}=1$ and $\overline{1}=0$.
We call the map $\rev_i$  {\bf reversal}.
\end{dfn}

The following codes, monotone codes, are known 
as single deletion error-correcting codes \cite{hagiwara2016ordered}.
However, no decoding algorithm has been studied.

\begin{dfn}[Monotone code\cite{hagiwara2016ordered}]
Let $n$ and $m$ be positive integers, $a$ an integer, and ${\bf k}=(k_1,k_2,\cdots, k_n)$ 
a positive monotonic increasing integer sequence
with $m>k_n$.
Define a set $M_{a,m,{\bf k}}(n)$ as
\begin{equation*}
M_{a,m,{\bf k}}(n)\coloneqq \left\{ {\bf x}\in \binset^n \mid
     \rho_{{\bf k}}({\bf x}) \equiv a \pmod m \right\}, 
\end{equation*}
where
\begin{equation*} 
\rho_{{\bf k}}({\bf x}) \coloneqq \sum_{i=1}^{|{\bf x}| } k_ix_i 
\end{equation*}
and $|{\bf x}|$ denotes the length of  ${\bf x}$.
$\rho_{{\bf k}}({\bf x})$ is defined only for {\bf k} with $|{\bf k}|  \geq |{\bf x}|$.
We call  $M_{a,m,{\bf k}}(n)$ a \textbf{monotone code}.
\end{dfn}

\begin{rem}
If ${\bf k}=(1,2,\cdots, n)$, the monotone code is called a Levenshtein code.
If ${\bf k}=(1,2,\cdots, n)$ and $m=n+1$, the monotone code is called a VT code.
Levenshtein \cite{levenshtein1966binary} proved that Levenshtein codes are 
single deletion error-correcting codes with $m > n$.
He also proved that Levenshtein codes are single deletion/reversal error-correcting codes
with $m \ge 2n$.
\end{rem}

The following is one of the main contributions of this paper.

\begin{thm}
Let $\mono$ be a monotone code and ${\bf x}$ be a codeword of $\mono$. 
Let $\mathrm{Dec}_M$ denote Algorithm 1.

1. $\mathrm{Dec}_M ({\bf x}) = {\bf x}$.

2. Assume $k_n < m$. For any single deletion $\del$, 
$\mathrm{Dec}_{M} \circ \del ({\bf x}) = {\bf x}$.

3. If $k_n < m$, any monotone code $\mono$ is a single deletion error-correcting code with Algorithm 1 as a decoding algorithm.

4. Assume $2k_n \leq m$. For any single reversal $\rev$,
$\mathrm{Dec}_{M} \circ \rev ({\bf x}) = {\bf x}$.

5. If $2k_n \leq m$, any monotone code $\mono$ is a single deletion/reversal error-correcting code with Algorithm 1 as a decoding algorithm.
Here $\circ$ denotes the map composition.
\end{thm}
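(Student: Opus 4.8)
The plan is to route everything through the single quantity
\[
D \coloneqq \bigl(a - \rho_{\bf k}({\bf y})\bigr) \bmod m,
\]
where ${\bf y}$ is the word handed to the decoder. Since Algorithm 1 generalizes Levenshtein's scheme, I expect it to branch first on the length of ${\bf y}$ and then on $D$: length $n$ with $D=0$ means ``no error'', length $n-1$ triggers deletion correction, and length $n$ with $D\ne 0$ triggers reversal correction. Under this reading statements~3 and~5 are immediate corollaries: once 1, 2 and 4 are established, every single deletion (resp.\ deletion or reversal) of a codeword is sent back to that codeword, which is exactly the defining property of a single deletion (resp.\ deletion/reversal) error-correcting code. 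So the genuine content lies in 1, 2 and 4, and I would prove those three.

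Statement~1 is the base case and should be short: a true codeword ${\bf x}$ of length $n$ satisfies $\rho_{\bf k}({\bf x})\equiv a$, hence $D=0$, and the ``no error'' branch returns ${\bf x}$ untouched. For statement~2 I would first put the deficiency in closed form. Writing ${\bf y}=\del_p({\bf x})$ and telescoping,
\[
\rho_{\bf k}({\bf x}) - \rho_{\bf k}({\bf y}) = k_p x_p + \sum_{j=p+1}^{n}(k_j - k_{j-1})\,x_j,
\]
all of whose terms are nonnegative. The hypothesis $k_n<m$ is precisely what prevents wraparound: the right-hand side is bounded by $k_p+(k_n-k_p)=k_n\le m-1$, so $D$ equals the true difference rather than a mere residue.

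The heart of statement~2, and the step I expect to be the main obstacle, is to show that $D$ together with ${\bf y}$ determines the value and (up to a run) the location of the deleted symbol. Re-expressing the two cases through the $1$-positions $q$ of ${\bf y}$, inserting a $0$ at position $p$ contributes $D_0(p)=\sum_{q\ge p,\,y_q=1}(k_{q+1}-k_q)$, while inserting a $1$ contributes $D_1(p)=k_p+D_0(p)$. With $S\coloneqq\sum_{q:\,y_q=1}(k_{q+1}-k_q)$ I would prove the separation inequalities $D_0(p)\le S$ and $D_1(p)\ge S+k_1>S$ for every $p$; the latter holds because $\sum_{q<p,\,y_q=1}(k_{q+1}-k_q)$ is at most the telescoping bound $k_p-k_1$. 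Hence the single threshold ``$D\le S$ versus $D>S$'' cleanly decides whether a $0$ or a $1$ was deleted, generalizing Levenshtein's weight threshold (the case $k_j=j$, $S=\mathrm{wt}_{\bf k}({\bf y})$). It then remains to check that the recovered insertion reproduces ${\bf x}$ exactly: $D_0$ and $D_1$ are monotone step functions of $p$ whose plateaus correspond precisely to the runs of the symbol being inserted (runs of $0$'s for $D_0$, runs of $1$'s for $D_1$), so the positions attaining the target value all lie inside one run, and inserting the symbol there yields a single well-defined word, namely ${\bf x}$.

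Finally, for statement~4 the received word has length $n$, and ${\bf y}=\rev_i({\bf x})$ gives $\rho_{\bf k}({\bf x})-\rho_{\bf k}({\bf y})=k_i(x_i-\overline{x}_i)=\pm k_i$, so $D=k_i$ when a $1$ was flipped and $D=m-k_i$ when a $0$ was flipped. Here $2k_n\le m$ keeps the candidate sets $\{k_i\}$ and $\{m-k_i\}$ from colliding: an equality $k_i+k_j=m$ would force $k_i=k_j=k_n$ and $2k_n=m$, the lone boundary case, in which both readings point to position $n$ and are reconciled by simply flipping $y_n$ (which restores $x_n$ regardless of direction). Strict monotonicity of ${\bf k}$ makes the index unique within each set, so $D$ pinpoints the flipped coordinate and the decoder inverts it. Statement~5 then follows by combining 1, 2 and 4 along the length-based branching above, every admissible error having been accounted for.
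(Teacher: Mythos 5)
Your proposal is correct and follows essentially the same route as the paper: your $D$ is the paper's residue $r$, your threshold $S$ is the paper's $w=\mathrm{wt}_{\bf k}({\bf y})$, your separation inequalities $D_0(p)\le S < S+k_1\le D_1(p)$ are the paper's Lemmas \ref{inequality} and \ref{remainder}, your plateau/run argument is the paper's proof that $0=y_i=\cdots=y_{p-1}$, and your treatment of the reversal case (including the boundary $2k_n=m$) matches Lemma \ref{remainder_rev} and its use. The only difference is cosmetic: you express the deficiency directly in terms of ${\bf x}$ rather than through the paper's $L^{(0)}_{\bf k}, L^{(1)}_{\bf k}, R^{(0)}_{\bf k}, R^{(1)}_{\bf k}$ functions of ${\bf y}$.
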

\begin{proof}
1. It is trivial from the steps 3 and 4 of Algorithm 1.
2. A proof is provided in subsection \ref{proof_del}.
3. It is a corollary of 1. and 2.
4. A proof is provided in subsection \ref{proof_rev}.
5. It is a corollary of  3. and 4.
 \end{proof}

\subsection{Decoding  algorithm for single deletion/reversal errors}
In this subsection, we provide notation in Algorithm 1 and the details of Algorithm 1. 

For  an integer $n\geq 1$,  a positive integer $i \in [n]$, 
a positive monotonic increasing integer sequence ${\bf k}=(k_1,k_2,\cdots, k_n)$, 
and a binary sequence ${\bf y}=y_1y_2\cdots y_{n-1}\in \binset^{n-1}$,
define maps as
\begin{align*}
&\Lzero \coloneqq
\begin{cases}
\displaystyle \sum_{j=1}^{i-1} \overline{y}_j(k_{j+1}-k_j)\quad &(i \neq 1) ,\\
0 &(i =1).
\end{cases}\\
&\Lone \coloneqq 
\begin{cases}
\displaystyle \sum_{j=1}^{i-1} y_j(k_{j+1}-k_j)\quad &(i \neq 1) ,\\
0 &(i =1).
\end{cases}\\
&\Rzero \coloneqq 
\begin{cases}
\displaystyle \sum_{j=i}^{n-1} \overline{y}_j(k_{j+1}-k_j)\quad &(i \neq n) , \\
0 &(i=n).
\end{cases}\\
&\Rone \coloneqq  
\begin{cases}
\displaystyle \sum_{j=i}^{n-1} y_j(k_{j+1}-k_j)\quad &(i \neq n), \\
0 &(i=n).
\end{cases}
\end{align*}
We denote $R^{(1)}_{\bf k}(1,{\bf y})$ by $\mathrm{wt}_{\bf k}({\bf y})$.
Note that $\mathrm{wt}_{\bf k}({\bf y})$ is coincided with the Hamming weight of {\bf y} if ${\bf k}=(1,2,\cdots, n).$
We omit ${\bf k}$ from the notations, if ${\bf k}=(1,2,\cdots, n)$. 

For ${\bf b} \in \binset$, $i \in [n]$, and $\ins_{i,{\bf b}}({\bf y})$, 
$L^{(0)}(i,{\bf y})$ (resp. $L^{(1)}(i,{\bf y})$) equals to the number of
$0$ (resp. $1$) to the left of inserted position $i$, $R^{(0)}(i,{\bf y})$ (resp. $R^{(1)}(i,{\bf y})$)
 equals to the number of $0$ (resp. $1$) 
to the right of inserted position $i$.

The map defined below is known as  an inverse operation of deletion 
and is used in Algorithm 1.

\begin{dfn}
For a positive integer $i \in [n+1]$ and a non-empty binary sequence ${\bf b}$,
define a map $\ins_{i,{\bf b}} : \binset^n \rightarrow \binset^{n+|{\bf b}|}$ as
\begin{equation*}
\ins_{i,{\bf b}}(x_1x_2\cdots x_n)\coloneqq\\
x_1\cdots x_{i-1}bx_{i}\cdots x_n .
\end{equation*}
We call the map $\ins_{i,{\bf b}}$  {\bf insertion}.
\end{dfn}
\begin{algorithm}
\caption{Decoding algorithm for single deletion/reversal errors}
\begin{algorithmic}[1]
\STATE Input: $a \in \mathbb{Z}_{\geq 0}$, $n$ and $m \in \mathbb{Z}_{\geq 1}$, 
${\bf y} \in \bigcup_{t \ge 0} \binset^t$, and 
${\bf k}=(k_1,k_2,\cdots,k_n)\in \mathbb{Z}^{n}$. \\
Output: ${\bf z} \in \binset^n$ or a symbol ?.
\STATE Compute the length of ${\bf y}$, say $|{\bf y}|$.
\IF {$|{\bf y}|  = n$} 
  \STATE go to 10.
\ELSIF{$|{\bf y}|  = n-1$}
  \STATE go to 21.
\ELSE  
  \STATE  ${\bf z} \coloneqq ?$. Go to 35.
\ENDIF
\STATE Compute $r \coloneqq \min \{s \in \mathbb{Z}_{\geq 0}
\mid s\equiv a-\rho_{\bf k}({\bf y}) \pmod {m} \}$.
\IF{$r=0$}
  \STATE ${\bf z} \coloneqq {\bf y}$. Go to 35.
\ELSE
  \STATE Compute $p \coloneqq \min \{j \in [n] \mid k_j = \min \{r,m-r\} \}$.
  \IF{$\rev_{p}({\bf y})\in \mono$}
    \STATE ${\bf z} \coloneqq \rev_{p}({\bf y})$. Go to 35.
  \ELSE
    \STATE ${\bf z} \coloneqq ?$. Go to 35.
  \ENDIF  
\ENDIF
\STATE Compute $r \coloneqq \min \{s \in \mathbb{Z}_{\geq 0}
 \mid s\equiv a-\rho_{\bf k}({\bf y}) \pmod {m} \}$.
\STATE Compute $w \coloneqq \mathrm{wt}_{\bf k}({\bf y})$.
\IF{$r\leq w$}
  \STATE Compute $p \coloneqq \max \{j \in [n] \mid R^{(1)}_{\bf k}(j,{\bf y})=r \}$.
  \STATE $b \coloneqq 0$.
\ELSE 
  \STATE Compute $p \coloneqq \min \{j \in [n] \mid
    L^{(0)}_{\bf k}(j,{\bf y})=r-w-k_1 \}.$
  \STATE $b \coloneqq 1$.
\ENDIF
\IF {$\ins_{p,{\bf b}}({\bf y}) \in \mono$}
  \STATE ${\bf z} \coloneqq \ins_{p,{\bf b}}({\bf y})$.
\ELSE
  \STATE ${\bf z} \coloneqq ?$.
\ENDIF
\STATE Output ${\bf z}$.
\end{algorithmic}
\end{algorithm}

Note that the steps 21st - 34th of Algorithm 1 are coincided with 
Levenshtein's decoding algorithm for single deletion errors 
if ${\bf k}=(1,2,\cdots, n)$  \cite{levenshtein1966binary}.

Before we move to proofs for Theorem II.4, 
we provide examples of Algorithm 1 for each error.

\begin{expl}[Decoding for a single deletion error]
Let $n=4, a=0, m=9, {\bf k}=(1,3,6,8),$ and ${\bf y}=101$.
Note that the monotone code has four words,
\begin{equation*} 
M_{0,9,{\bf k}}(4)=\{0000,1001,0110,1111\}.
\end{equation*}
\begin{itemize}
\item Since $|{\bf y}| =3$, we go to step 6 and then to step 21.
\item Algorithm 1 computes $r$ and $w$. We obtain
 $r=2$ and $w=4.$
\item Since $r\leq w$,
 \begin{align*} 
p&=\max \{j \in [4] \mid R^{(1)}(j,{\bf y})=2) \}\\
  &=3,\\
b&=0.
\end{align*}
\item The output $\mathrm{Dec}_{M}({\bf y})=\ins_{p,b}({\bf y})$ is  
\begin{equation*}
\ins_{3,0}(101)=1001.
\end{equation*}
\end{itemize}
\end{expl}

\begin{expl}[Decoding for a single reversal error]
Let $n=6, a=0, m=20, {\bf k}=(1,2,3,8,9,10),$ and ${\bf y}=111110$.
Note that the monotone code has five words,
\begin{equation*} 
M_{0,20,{\bf k}}(6)=\{000000,110110,001110,100011,010101\}.
\end{equation*}
\begin{itemize}
\item Since $|{\bf y}| =6$, we go to step 4 and then to step 10.
\item Algorithm 1 computes $r$, then we obtain $r=17$.
\item Since $r \neq 0$,
\begin{align*}
p&=\min\{3,17\} \\
  &=3.
\end{align*}
\item The output $\mathrm{Dec}_{M}({\bf y})=\rev_{p}({\bf y})$ is  
\begin{equation*}
\rev_{3}(111110)=110110.
\end{equation*}
\end{itemize}
\end{expl}

\subsection{Proof for single deletion error-correction}\label{proof_del}

To prove 2 of Theorem II.4, we introduce the following four Lemmas 
\ref{telescope}, \ref{weight}, \ref{inequality}, and \ref{remainder}.
From now on, till the end of this subsection, we assume the following.

\begin{hyp}\label{hyp_del} 
A binary sequence ${\bf x}$ is a codeword of $\mono$. 
Set ${\bf y} \coloneqq \del_i ({\bf x})$ for a fixed $i$.
$r$ is the value at the step 21 of Algorithm 1.
\end{hyp}

\begin{lem}\label{telescope}
$$k_i=k_1+\Lzero+\Lone.$$
\end{lem}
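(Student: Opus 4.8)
The plan is to collapse the identity into a telescoping sum. First I would dispose of the base case $i = 1$: by definition both $\Lzero$ and $\Lone$ equal $0$, so the asserted equality reduces to $k_1 = k_1$ and holds trivially.

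For $i \neq 1$ I would add the two defining sums term by term. Because each bit $y_j \in \binset$ satisfies $y_j + \overline{y}_j = 1$, the coefficients multiplying $(k_{j+1} - k_j)$ combine, so that
\[
\Lzero + \Lone = \sum_{j=1}^{i-1} (\overline{y}_j + y_j)(k_{j+1} - k_j) = \sum_{j=1}^{i-1} (k_{j+1} - k_j).
\]
The right-hand side telescopes to $k_i - k_1$. Rearranging gives $k_i = k_1 + \Lzero + \Lone$, which is the claim.

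I do not expect any real obstacle: once the two sums are merged, the lemma is a single line of computation, and the only point needing attention is to use $\overline{y}_j = 1 - y_j$ so that the bit-weighted coefficients reduce to the plain difference $k_{j+1} - k_j$. The genuine content — which is presumably why it is isolated as a lemma for the later arguments — is the observation that $\Lzero + \Lone$ is independent of the particular word ${\bf y}$ and is determined solely by the index $i$ together with the sequence ${\bf k}$.
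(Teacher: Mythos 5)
Your proof is correct and follows essentially the same route as the paper's: both merge the two sums via $\overline{y}_j + y_j = 1$ and then telescope $\sum_{j=1}^{i-1}(k_{j+1}-k_j) = k_i - k_1$. Your explicit treatment of the base case $i=1$ is a minor addition the paper omits, but the argument is otherwise identical.
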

\begin{proof}
It follows from the definitions of $\Lzero$ and $\Lone$.
\begin{align*}
(\text{R.H.S.})
          &=k_1+\sum_{j=1}^{i-1} \overline{y}_j(k_{j+1}-k_j)+\sum_{j=1}^{i-1} y_j(k_{j+1}-k_j)\\
          &=k_1+\sum_{j=1}^{i-1} (\overline{y}_j+y_j)(k_{j+1}-k_j)\\
          &=k_1+\sum_{j=1}^{i-1} (k_{j+1}-k_j)\\
          &=k_i.
\end{align*}
\end{proof}

\begin{lem}\label{weight}
$\monoweight=\Lone+\Rone$.
\end{lem}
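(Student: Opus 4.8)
The plan is to expand the definition of $\monoweight$ and to recognize the two terms on the right-hand side as complementary pieces of a single summation; no induction or auxiliary lemma should be needed.

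First I would recall that, by definition, $\monoweight = R^{(1)}_{\bf k}(1,{\bf y})$. Since the length-$(n-1)$ sequence ${\bf y}$ forces $n \ge 2$, the index $1$ is distinct from $n$, so the first branch of the definition of $R^{(1)}_{\bf k}$ applies and gives
\begin{equation*}
\monoweight = \sum_{j=1}^{n-1} y_j(k_{j+1}-k_j).
\end{equation*}

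Next I would split this sum at the index $i$, writing $\sum_{j=1}^{n-1} = \sum_{j=1}^{i-1} + \sum_{j=i}^{n-1}$. The first partial sum is exactly $\Lone$ and the second is exactly $\Rone$, directly by their defining formulas, which immediately yields the claimed identity $\monoweight = \Lone + \Rone$.

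The only point requiring attention, and the closest thing to an obstacle here, is the boundary behaviour at $i=1$ and $i=n$, where one of the two partial sums becomes empty. When $i=1$ the left sum vanishes, and this agrees with the convention $\Lone = 0$ taken in the $i=1$ clause of the definition; when $i=n$ the right sum vanishes, matching $\Rone = 0$ from the $i=n$ clause. After checking that these two conventions line up with the empty-range summations, the decomposition is valid uniformly for every $i \in [n]$, and the proof is complete.
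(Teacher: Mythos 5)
Your proof is correct and is essentially the same argument as the paper's: both expand the defining sums and use the decomposition $\sum_{j=1}^{n-1} = \sum_{j=1}^{i-1} + \sum_{j=i}^{n-1}$, with you merely working from left to right rather than right to left. Your explicit check that the $i=1$ and $i=n$ conventions agree with the empty-sum interpretation is a small point of added care that the paper leaves implicit.
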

\begin{proof}
It follows from the definitions of $\monoweight , \Lone , $ and $\Rone$. 
\begin{align*}
(\text{R.H.S.})
          &=\sum_{j=1}^{i-1} y_j(k_{j+1}-k_j)+\sum_{j=i}^{n-1} y_j(k_{j+1}-k_j)\\
          &=\sum_{j=1}^{n-1} y_j(k_{j+1}-k_j)\\
          &=\monoweight.
\end{align*}
\end{proof}

\begin{lem}\label{inequality}
The following four inequalities hold.
\begin{align}
0&\leq \Rone \\ 
  &\leq \monoweight \\
  & < k_1+\monoweight+\Lzero \\ 
  &\leq k_n.
  \end{align}
\end{lem}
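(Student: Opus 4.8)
The plan is to treat the four-term chain as four separate inequalities, each of which reduces to the single structural fact that $\mathbf{k}$ is strictly increasing (so every difference $k_{j+1}-k_j$ is a positive integer) together with the boundedness $0 \le y_j \le 1$ of the binary coordinates, aided by Lemmas \ref{telescope} and \ref{weight}.

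For the first inequality $0 \le \Rone$, I would simply note that $\Rone = \sum_{j=i}^{n-1} y_j(k_{j+1}-k_j)$ is a sum of nonnegative terms, since each $y_j \ge 0$ and each $k_{j+1}-k_j > 0$. For the second inequality $\Rone \le \monoweight$, I would invoke Lemma \ref{weight} in the form $\monoweight = \Lone + \Rone$ and observe that $\Lone \ge 0$ by the same positivity argument, so that dropping $\Lone$ yields the claim. For the third, strict, inequality, I would cancel the common term $\monoweight$ from both sides, reducing it to $0 < k_1 + \Lzero$; this holds because $k_1 \ge 1$ (the sequence is positive) while $\Lzero \ge 0$ again by nonnegativity of its summands.

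The only step requiring a genuine combination of the earlier lemmas is the fourth inequality $k_1 + \monoweight + \Lzero \le k_n$. Here I would first rewrite the left-hand side: substituting $\monoweight = \Lone + \Rone$ from Lemma \ref{weight} and then using $k_1 + \Lzero + \Lone = k_i$ from Lemma \ref{telescope} collapses it to $k_i + \Rone$. It then suffices to bound $\Rone = \sum_{j=i}^{n-1} y_j(k_{j+1}-k_j) \le \sum_{j=i}^{n-1}(k_{j+1}-k_j) = k_n - k_i$, where the first step uses $y_j \le 1$ and the second is a telescoping sum. Adding $k_i$ gives $k_i + \Rone \le k_n$, completing the chain.

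I expect no genuine obstacle here: the content lies entirely in the bookkeeping of which terms the two prior lemmas allow me to eliminate. The one subtlety to watch is that $\Rone$ and $\Lzero$ carry the $(i \ne 1)$ and $(i = n)$ boundary conventions in their definitions, so I would confirm that the nonnegativity and telescoping arguments degenerate correctly (both to $0$) at the endpoints $i=1$ and $i=n$, rather than breaking.
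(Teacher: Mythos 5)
Your proposal is correct and follows essentially the same route as the paper: nonnegativity of the summands for (1)--(3), and for (4) the same collapse of $k_1+\monoweight+\Lzero$ to $k_i+\Rone$ via Lemmas \ref{telescope} and \ref{weight}, followed by the telescoping bound $\Rone \le k_n - k_i$. Your extra remark about checking the $i=1$ and $i=n$ boundary conventions is a sensible addition but does not change the argument.
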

\begin{proof}
The inequality (1)  follows from that ${\bf k}$ is a positive monotonic increasing integer sequence.
The inequality (2) follows from Lemma \ref{weight} and $\Lone \geq 0$.
The inequality (3) follows from $k_1>0$ and $\Lzero \geq 0$.

We show the inequality (4).
From Lemma \ref{telescope} and Lemma \ref{weight}, the equation on the first line below follows.
\begin{align*}
k_1+\monoweight+\Lzero&=k_i+\Rone \\
          &=k_i+\sum_{j=i}^{n-1} y_j(k_{j+1}-k_j) \\
          &\leq k_i+\sum_{j=i}^{n-1} (k_{j+1}-k_j) \\
          &=k_n.
\end{align*}
\end{proof}

\begin{lem}\label{remainder}
\begin{equation*}r=
\begin{cases}
\Rone &(x_i=0),\\
k_1+\monoweight+\Lzero& (x_i=1).
\end{cases}
\end{equation*}
\end{lem}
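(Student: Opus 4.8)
The plan is to reduce $a-\rho_{\bf k}({\bf y})$ modulo $m$ to the difference $\rho_{\bf k}({\bf x})-\rho_{\bf k}({\bf y})$, evaluate that difference exactly, and then show it already lies in $\{0,1,\dots,m-1\}$ so that it coincides with the minimal nonnegative residue $r$ defined at step 21.

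First, since ${\bf x}$ is a codeword of $\mono$, we have $\rho_{\bf k}({\bf x})\equiv a \pmod m$, hence $a-\rho_{\bf k}({\bf y})\equiv \rho_{\bf k}({\bf x})-\rho_{\bf k}({\bf y}) \pmod m$. I would then compute this difference directly from ${\bf y}=\del_i({\bf x})$. Writing $\rho_{\bf k}({\bf x})=\sum_{j=1}^{n} k_j x_j$ and using $y_j=x_j$ for $j<i$ and $y_j=x_{j+1}$ for $j\ge i$ in $\rho_{\bf k}({\bf y})=\sum_{j=1}^{n-1} k_j y_j$, the first $i-1$ terms cancel and a reindexing $j\mapsto j+1$ of the tail sum yields the telescoped expression $\rho_{\bf k}({\bf x})-\rho_{\bf k}({\bf y})=k_i x_i+\sum_{j=i+1}^{n}(k_j-k_{j-1})x_j$. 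The same reindexing identifies the trailing sum as $\Rone$, so the difference is exactly $k_i x_i+\Rone$.

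Next I would split on the value of $x_i$. If $x_i=0$, the difference is simply $\Rone$. If $x_i=1$, it is $k_i+\Rone$; here I substitute $k_i=k_1+\Lzero+\Lone$ from Lemma \ref{telescope} and collect $\Lone+\Rone=\monoweight$ via Lemma \ref{weight}, obtaining $k_1+\monoweight+\Lzero$. This matches the two cases of the claimed formula, up to reduction modulo $m$.

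The final and only delicate step is to argue that no reduction is actually needed, so that the candidate value equals $r$ itself. For this I invoke Lemma \ref{inequality}, which gives $0\le \Rone \le k_1+\monoweight+\Lzero \le k_n$; thus both candidate values are nonnegative and at most $k_n$. Since the monotone code is defined with $k_n<m$, both values lie strictly below $m$, so each is precisely the minimal nonnegative integer congruent to $a-\rho_{\bf k}({\bf y})$ modulo $m$, i.e. equals $r$. The main obstacle is the careful index bookkeeping in the telescoping step; once the difference is recognized as $k_i x_i+\Rone$, the conclusion follows from the three preceding lemmas together with the standing hypothesis $m>k_n$.
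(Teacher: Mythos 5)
Your proposal is correct and follows essentially the same route as the paper: reduce $a-\rho_{\bf k}({\bf y})$ to $\rho_{\bf k}({\bf x})-\rho_{\bf k}({\bf y})$, telescope to get $k_ix_i+\Rone$, rewrite the $x_i=1$ case via Lemmas \ref{telescope} and \ref{weight}, and use Lemma \ref{inequality} with $k_n<m$ to conclude that no modular reduction occurs. The only cosmetic difference is that you index the tail sum through ${\bf x}$ and reindex, whereas the paper writes ${\bf x}=\ins_{i,x_i}({\bf y})$ and cancels termwise; the computations are identical.
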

\begin{proof}
The value of $r$ is obtained by using Lemma \ref{telescope}, 
Lemma \ref{weight} and Lemma \ref{inequality}.
It follows from the definitions of $r$ and $a$.
\begin{align*}
r&\equiv a-\rho_{{\bf k}}({\bf y}) \pmod m \\
 &= a-\sum_{i=1}^{n-1} k_iy_i  \\
 &\equiv \sum_{i=1}^n k_ix_i - \sum_{i=1}^{n-1} k_iy_i  \pmod m.
\end{align*}
By the assumption ${\bf y}=\mathrm{del}_i({\bf x})$, the following holds.
\begin{equation*}
{\bf x}=\ins_{i,x_i}({\bf y})=y_1\cdots y_{i-1}x_{i}y_{i}\cdots y_{n-1}.
\end{equation*}
Therefore,
\begin{align*}
&\sum_{i=1}^n k_ix_i - \sum_{i=1}^{n-1} k_iy_i\\
 &=k_1y_1+\cdots +k_{i-1}y_{i-1}+k_{i}x_{i}+k_{i+1}y_{i}\cdots +k_{n}y_{n-1}\\
 &-k_1y_1+\cdots +k_{i-1}y_{i-1}~~~~~~~~~+k_{i}y_{i}\cdots +k_{n-1}y_{n-1}\\
 &=k_ix_i+\sum_{j=i}^{n-1} y_j(k_{j+1}-k_j)\\
 &=k_ix_i+\Rone \\
 &=
 \begin{cases}
 \Rone &(x_i=0),\\
 k_i+\Rone&(x_i=1)\\
 \end{cases}\\
 &=
 \begin{cases}
 \Rone &(x_i=0),\\
 k_1+\monoweight+\Lzero&(x_i=1).
 \end{cases}
\end{align*}
The last follows from Lemma \ref{telescope} and Lemma \ref{weight}.
We show that the equality holds in the cases of  $x_i = 0$ and $x_i = 1$.

Case $x_i = 0$: we have shown 
\begin{equation*}
r \equiv \Rone \pmod m .
\end {equation*}
On the other hand, the following inequalities hold by Lemma \ref{inequality}.
\begin{equation*}
0 \leq \Rone <k_n <m. 
\end{equation*} 
By the definition of $r$,
\begin{equation*}
0\leq r<m.
\end{equation*}
This implies $r = \Rone$.

Case $x_i =1$: We have shown 
\begin{equation*}
r \equiv k_1 + \monoweight + \Lzero \pmod m .
\end{equation*}
On the other hand, the following inequalities hold by Lemma \ref{inequality}.
\begin{equation*}
0 \leq k_1+\monoweight+\Lzero \leq k_n<m.
\end{equation*}
By the definition of $r$,
\begin{equation*}
0 \leq r<m.
\end{equation*}

This implies $r=k_1+\monoweight+\Lzero$.

Therefore, Lemma \ref{remainder} holds. 
\end{proof}

\begin{proof}[proof of 2 of Theorem II.4]
Let us focus on the step 23 of Algorithm 1.
The case of $r \leq w$ and the case of $r > w$ are shown separately.

In the case of $r \leq w$, we have $\mathrm{Dec}_{M}({\bf y})=\ins_ {p, 0} ({\bf y})$.
We show $\ins_{p,0} ({\bf y}) = {\bf x}$.
First, we show that $r = \Rone$ holds. This is shown by contradiction.
From Lemma \ref {remainder}, either 
\begin{equation*}
r = \Rone \quad \text{or}
\end{equation*}
\begin{equation*} 
r = k_1 + \monoweight + \Lzero
\end{equation*}
holds.
Assume that $r = k_1 + \monoweight + \Lzero$ holds.
Lemma \ref{inequality} implies $r > w=\monoweight$, 
which contradicts to $r \leq w$.
Thus, $ r = \Rone $ holds.

Next, we show that the deleted symbol $ x_i $ is equal to $ 0 $.
This is also shown by contradiction.
Since $x_i \in \binset$, either $x_i =0 $ or $x_i =1$ holds.
If $x_i = 1$ holds, Lemma \ref{remainder} implies 
\begin{equation*}
r=k_1 + \monoweight + \Lzero, 
 \end{equation*} which contradicts to
\begin{equation*}
r=\Rone.
\end{equation*}
Therefore, $x_i = 0$.

Finally, we show that $\ins_ {p,0}({\bf y}) = {\bf x}$.
We showed the deleted symbol $x_i$ is equal to $0$. 
Since $x_i=0$ and ${\bf y} = \del_i ({\bf x}) $, ${\bf x} =\ins_ {i,0} ({\bf y})$ holds.
Therefore, it suffices to prove that $\ins_ {p,0} ({\bf y}) = \ins_ {i, 0} ({\bf y}) $. 
Furthermore, we will show 
\begin{equation*}
0=y_i=y_{i+1}=\cdots=y_{p-1}.
\end{equation*}
Since $r=\Rone$, then
\begin{equation*}
i\in \{j\in[n] \mid  R^{(1)}_{\bf k}(j,{\bf y})= r\}
\end{equation*}
holds. Since
\begin{equation*}
p\in \{j\in[n] \mid R^{(1)}_{\bf k}(j,{\bf y}) =r\},
\end{equation*}
then
\begin{align*}
r=\Rone=R^{(1)}_{\bf k}(p,{\bf y})
\end{align*}
holds. Therefore,
\begin{align*}
\sum_{j=i}^{n-1} y_j(k_{j+1}-k_j)=\sum_{j=p}^{n-1} y_j(k_{j+1}-k_j)
\end{align*}
holds. In the case of $r \leq w$, $i \leq p$ follows from the definition of $p$.
Since $i \leq p$, we have
\begin{align*}
0&=\sum_{j=i}^{n-1} y_j(k_{j+1}-k_j)-\sum_{j=p}^{n-1} y_j(k_{j+1}-k_j)\\
&=\sum_{j=i}^{p-1} y_j(k_{j+1}-k_j).
\end{align*}
Since $(k_ {j + 1} -k_j)> 0$, we have
\begin{equation*} 
0 = y_i = y_ {i + 1} = \cdots = y_{p-1}.
\end {equation*}

By a similar argument, we can prove in the remaining case 
$r > w$.
\end{proof}

\subsection{Proof for single reversal error-correction}\label{proof_rev}

To prove 4 of Theorem II.4, we introduce the following Lemma \ref{remainder_rev}.
From now on, till the end of this subsection, we assume the following.
\begin{hyp}\label{hyp_rev}
A binary sequence ${\bf x}$ is a codeword of $\mono$ with $m \geq 2k_n$. 
Set ${\bf y} \coloneqq \rev_i ({\bf x})$ for a fixed $i$.
$r$ is the value at the step 10 of Algorithm 1.
\end{hyp}

\begin{lem}\label{remainder_rev}
\begin{equation*}r=
\begin{cases}
k_i &(y_i=0),\\
m-k_i & (y_i=1)
\end{cases}
\end{equation*}
and $r \neq 0$.
\end{lem}   
\begin{proof}
It follows from the definitions of $r$ and $a$.
\begin{align*}
r&\equiv a-\rho_{{\bf k}}({\bf y}) \pmod m \\
&= a-\sum_{i=1}^{n} k_iy_i  \\
 &\equiv \sum_{i=1}^n k_ix_i - \sum_{i=1}^{n} k_iy_i  \pmod m.
\end{align*}
By the assumption ${\bf y}=\rev_i({\bf x})$, the following holds.
\begin{equation*}
{\bf x}=\rev_i({\bf y})=y_1\cdots y_{i-1}\overline{y}_iy_{i+1}\cdots y_n.
\end{equation*}
Therefore, we have
\begin{align*}
&\sum_{i=1}^n k_ix_i - \sum_{i=1}^{n} k_iy_i  \\
&=k_1y_1+\cdots +k_{i-1}y_{i-1}+k_{i}\overline{y}_{i}+k_{i+1}y_{i+1}\cdots +k_{n}y_{n}\\
&-k_1y_1+\cdots +k_{i-1}y_{i-1}+k_{i}y_{i}+k_{i+1}y_{i+1} \cdots +k_{n}y_{n}\\
&=k_i(\overline{y}_i - y_i)\\
&=
 \begin{cases}
 k_i &(y_i=0),\\
 -k_i&(y_i=1) \\
 \end{cases}\\
&\equiv
 \begin{cases}
 k_i &(y_i=0), \\
 m-k_i&(y_i=1). \\
 \end{cases}\\
\end{align*}
We show that the equality holds in the cases of $y_i=0$ and $y_i=1$. 
Since the sequence {\bf k} is a positive monotonic increasing integer sequence with $m \geq 2k_n$, 
we have $0<k_i\leq k_n <m$ and $0<m-k_n\leq m-k_i<m $. Therefore, 
\begin{equation*} 
r=
 \begin{cases}
 k_i &(y_i=0), \\
 m-k_i&(y_i=1) \\
 \end{cases}\\
 \end{equation*}
 and $r \neq 0$.
\end{proof}

\begin{proof}[proof of of 4 of Theorem II.4]
Lemma \ref{remainder_rev} implies $r \neq 0$. Therefore, $\mathrm{Dec}_{M}({\bf y})  \neq {\bf y}$.
Let us focus on the step 14 of Algorithm 1.
Lemma \ref{remainder_rev} implies $r=k_i$ or $r=m-k_i$. Whichever $r=k_i$ or $r=m-k_i$, 
\begin{align*}
\min \{r,m-r\}&=\min \{k_i, m-k_i\} \\
&=k_i
\end{align*}
holds, since m $\geq 2k_i$. Furthermore, for distinct indices $j_1$ and $j_2 \in [n]$, 
$k_{j_1} \neq k_{j_2}$ holds, since ${\bf k}$ is a positive monotonic increasing integer sequence.   
Therefore, we have 
\begin{align*}
p&=\min \{j \in [n] \mid k_j = \min \{r,m-r\} \} \\
  &=i 
\end{align*}
holds. Thus, 
\begin{align*}
\rev_{p}({\bf y}) 
&= \rev_{i}({\bf y}) \\
&={\bf x} \\
& \in \mono .
\end{align*} 
Therefore,  $\mathrm{Dec}_{M}({\bf y})=\rev_{p}({\bf y}) ={\bf x}$.
\end{proof}

\section{Azinv Code and its Decoding Algorithm}\label{azinvalgo}

In this section, we provide an algorithm to make azinv codes 
 single BAD/BAR error-correctable (Algorithm 2).
Errors treated in this section 
are BAD errors and BAR errors, which are defined below.

\begin{dfn}[BAD and BAR]
For an integer $n \geq 2$  and  $i \in [n-1]$, 
define a partial map $\mathrm{BD}_i:\binset^n \rightarrow \binset^{n-2}$ as
\begin{equation*}
\mathrm{BD}_i(x_1x_2\cdots x_n)\coloneqq x_1\cdots x_{i-1}x_{i+2}\cdots x_{n} 
\end{equation*}
only for ${\bf x}$ with  $x_i\neq x_{i+1}$.
We call the partial map $\mathrm{BD}_i$  {\bf balanced adjacent deletion (BAD)}.

Define a partial map $\mathrm{BR}_i:\binset^n \rightarrow \binset^n $ as
\begin{equation*}
\mathrm{BR}_i(x_1x_2\cdots x_n)  \coloneqq
x_1\cdots x_{i-1}\overline{x}_{i}\overline{x}_{i+1}\cdots x_n 
\end{equation*}
only for ${\bf x}$ with $x_i\neq x_{i+1}$.
We call the partial map $\mathrm{BR}_i$  {\bf balanced adjacent reversal (BAR)}.
\end{dfn}

The following codes, azinv codes,  are known as single BAD error-correcting codes \cite{hagiwara2017perfect}.
However, no decoding algorithm has been studied.

\begin{dfn}[Azinv code \cite{hagiwara2017perfect}]
For integers $n \geq 2$ and $m \geq 2$ and an integer $a$, 
define a set $\azinv$ as
\begin{align*}
\azinv\coloneqq \{ {\bf x} \in \binset^{n} \mid \tau({\bf x})
\equiv a \pmod m,  {\bf x} \neq {\bf 0},{\bf 1}\}
\end{align*}
with $m \ge n$, where ${\bf 0}$ (resp. ${\bf 1}$) is the all zero (resp. one) word,
the function $\tau$ is the composition of the function $\mathrm{inv}$ below 
and the permutation $\sigma^{-1}$ below i.e., 
$\tau \coloneqq \mathrm{inv}  \circ \sigma^{-1}$.

The function $\mathrm{inv}$ is a map from a binary word to a non-negative integer
and is defined as
\begin{equation*}
\mathrm{inv}(x_1x_2\cdots x_n)\coloneqq \# \{(i,j) \mid 1\leq i < j \leq n, x_i > x_j \}.
\end{equation*}
The value $\mathrm{inv}({\bf x})$ is called the inversion number of ${\bf x}$.

The permutation $\sigma$ is defined as

$\sigma(x_1x_2\cdots x_n)\coloneqq$
\begin{equation*}
\begin{cases}
x_1x_nx_2x_{n-1}x_3\cdots x_{\frac{n+4}{2}}x_{\frac{n}{2}}x_{\frac{n+2}{2}}     &(n: \text{even}),\\
x_1x_nx_2x_{n-1}x_3\cdots x_{\frac{n-1}{2}}x_{\frac{n+3}{2}}x_{\frac{n+1}{2}} &(\text{otherwise}).
\end{cases}
\end{equation*}
We call $\azinv$ an azinv code.
\end{dfn}

The following is one of the main contributions of this paper.
\begin{thm}
Let $\azinv$ be an azinv code and ${\bf x}$ be a codeword of $\azinv$. 
Let $\mathrm{Dec}_A$ denote Algorithm 2.

1. $\mathrm{Dec}_A({\bf x}) = {\bf x}$.

2. Assume $n \leq m$. For any single BAD $\BAD$, 
$\mathrm{Dec}_{A} \circ \BAD ({\bf x}) = {\bf x}$.

3. If $n \leq m$, any azinv code $\azinv$ is a single BAD error-correcting code with Algorithm 2 as a decoding algorithm.

4. Assume $2(n-1) \leq m$. For any single BAR $\BAR$,
$\mathrm{Dec}_{A} \circ \BAR ({\bf x}) = {\bf x}$.

5. If $2(n-1) \leq m$, any monotone code $\azinv$ is a single BAD/BAR error-correcting code with Algorithm 2 as a decoding algorithm.
Here $\circ$ denotes the map composition.
\end{thm}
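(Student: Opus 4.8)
The plan is to prove the five parts by mirroring the proof of Theorem II.4, with the weighted syndrome $\rho_{\bf k}$ replaced by $\tau=\mathrm{inv}\circ\sigma^{-1}$ and deletion/reversal replaced by the positioned maps $\mathrm{BD}_i$/$\mathrm{BR}_i$. Parts~1,~3,~5 are immediate: part~1 holds because when the received word already has length $n$ and the computed syndrome $r$ equals $0$, Algorithm~2 returns it unchanged (the analogue of steps~3--4 of Algorithm~1); part~3 is a corollary of parts~1 and~2; and part~5 is a corollary of parts~3 and~4. Hence the content lies in part~2 (BAD) and part~4 (BAR), and for each I would follow the three-step pattern used in Subsections~\ref{proof_del} and~\ref{proof_rev}: a remainder lemma computing $r$, an inequality lemma pinning $r$ into $[0,m)$, and a reconstruction argument showing Algorithm~2 inverts the error.

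For part~2 I would fix a hypothesis analogous to Hypothesis~\ref{hyp_del}, setting ${\bf y}:=\mathrm{BD}_i({\bf x})$, and compute $r\equiv\tau({\bf x})-\tau({\bf y})\pmod m$. The crux is a structural lemma: writing ${\bf u}:=\sigma^{-1}({\bf x})\in\binset^n$, I claim that refolding the shortened word by $\sigma^{-1}$ at length $n-2$ yields exactly ${\bf u}$ with a \emph{symmetric} pair of coordinates deleted, namely $\{l,\,n+1-l\}$ when $i=2l-1$ and $\{l+1,\,n+1-l\}$ when $i=2l$. Because $x_i\neq x_{i+1}$, one deleted coordinate of ${\bf u}$ is a $1$ and the other a $0$. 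Deleting coordinates from a sequence lowers its inversion number by exactly the inversions incident to those coordinates, so $r$ becomes a sum of counts of $0$s and $1$s in prefixes/suffixes of ${\bf u}$ determined by $i$ --- the azinv analogues of $\Lzero,\Rone$. Translating these counts back through the fold and bounding them by a lemma analogous to Lemma~\ref{inequality} (this is where $n\le m$ enters) shows $0\le r<m$, so $r$ together with the known length $n-2$ determines both the pattern ($01$ versus $10$) and the position $i$; a run-length argument as in the final display of Subsection~\ref{proof_del} then shows the insertion step of Algorithm~2 reconstructs ${\bf x}$ even when several positions share one syndrome.

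For part~4 the length is preserved, so the analysis parallels Lemma~\ref{remainder_rev}. Since $\mathrm{BR}_i$ flips both $x_i$ and $x_{i+1}$, under $\sigma^{-1}$ it flips the same symmetric pair of ${\bf u}$ as above, exchanging one $0$ and one $1$; hence the number of ones is unchanged and only the position-sum of the ones moves, by the gap between the two symmetric coordinates, which equals $n-i$ in both parity cases. Using the closed form $\mathrm{inv}({\bf u})=s\,n-\sum_t p_t-\binom{s}{2}$ (with $s$ the number of ones and $p_1<\dots<p_s$ their positions) gives $\tau({\bf y})-\tau({\bf x})=\pm(n-i)$, the sign recording whether the flipped pattern was $01$ or $10$. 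The remainder lemma therefore reads $r=n-i$ or $r=m-(n-i)$, with $r\neq0$ because $1\le n-i\le n-1<m$. Under $2(n-1)\le m$ we get $\min\{r,m-r\}=n-i$, an injective function of $i\in[n-1]$, so Algorithm~2 recovers $i$ from $\min\{r,m-r\}$ and applies $\mathrm{BR}_i$ to return ${\bf x}$, exactly as the reversal branch of Algorithm~1 uses $\min\{r,m-r\}$ to locate a single reversal. This identifies $n-i$ as the effective ``weight'' playing the role of $k_i$, which is the source of the unification in Section~\ref{comparison}.

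The main obstacle is the BAD case, and specifically the structural lemma above. Unlike a reversal, a BAD changes the block length from $n$ to $n-2$, so the folding permutation itself changes; proving that refolding the shortened word coincides with deleting the correct symmetric pair from ${\bf u}$ requires a careful index chase through $\sigma$, with separate attention to the parity of $i$, to the two central coordinates of the fold, and to the excluded words ${\bf 0},{\bf 1}$. Once that lemma is in place, the inversion bookkeeping --- counting inversions incident to the deleted pair, including the single possibly doubly-counted pair $(p,q)$ --- and the run-length reconstruction are routine adaptations of Subsection~\ref{proof_del}.
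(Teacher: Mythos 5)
Your proposal is correct and follows essentially the same route as the paper: parts 1, 3, 5 are dispatched exactly as in the paper, part 2 rests on the same three ingredients (a remainder lemma identifying $r$ with $L^{(1)}(i,{\bf \tilde y})$ or $1+\mathrm{wt}({\bf \tilde y})+R^{(0)}(i,{\bf \tilde y})$ according to the deleted pattern, an inequality lemma pinning $r$ into $[0,m)$ via $n\le m$, and the run-length reconstruction), and part 4 matches Lemma \ref{remainder_BAR} with $r\in\{n-i,\,m-(n-i)\}$ and $\min\{r,m-r\}=n-i$. Your ``symmetric pair under the fold'' lemma is exactly what the paper establishes by recursively unfolding $\sigma^{-1}(y_1y_2\cdots y_n)=y_1\,\sigma^{-1}({\bf y}_{[3,n]})\,y_2$, and your use of the closed-form inversion formula in the BAR case is only a cosmetic variant of the paper's direct inversion count.
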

\begin{proof}
1. It is trivial from the steps 3 and 4 of Algorithm 2.
2. A proof is provided in subsection \ref{proof_BAD}.
3. It is a corollary of 1. and 2.
4. A proof is provided in subsection \ref{proof_BAR}.
5. It is a corollary of 3. and 4.
 \end{proof}

\subsection{Decoding algorithm for single BAD errors}
In this subsection, we provide the details of Algorithm 2.
Curiously, Algorithm 2 is similar to Algorithm 1 for correcting single 
 deletion/reversal errors for monotone codes.

We introduce the following notation.
For a positive integer $n$ and a binary sequence ${\bf y}=y_1y_2\cdots y_n$,
define ${\bf \tilde{y}}$ as
\begin{equation*}
{\bf \tilde{y}} \coloneqq 
\begin{cases}
y_1\overline{y}_2y_3\cdots y_{n-1}\overline{y}_{n} &(n: \text{even}),\\
y_1\overline{y}_2y_3\cdots \overline{y}_{n-1}y_{n} &(\text{otherwise}).
\end{cases}
\end{equation*}
For the $i$th entry of ${\bf y}$, say $y_i$, define $\tilde{y}_i$ as \\
\begin{equation*}
\tilde{y}_i \coloneqq
\begin{cases}
\overline{y}_i & (i: \text{even}), \\
y_i &(\text{otherwise}).  
\end{cases}
\end{equation*}
For integers $i$ and $j$, define $[i,j]$ as 
$[i,j]\coloneqq \{k \in \mathbb{Z}|i \leq k \leq j \}$.
We denote the subsequence of ${\bf y}$ in the range $[i,j]$, by ${\bf y}_{[i,j]}$, i.e.,  
${\bf y}_{[i,j]} \coloneqq  y_i y_{i+1} \cdots y_j$.
By using the range notation, the permutation $\sigma^{-1}$ can be written in the following form.
\begin{rem}
$\sigma^{-1}(y_1y_2 \cdots y_n)=y_1 \sigma^{-1}({\bf y}_{[3,n]}) y_2$.
\end{rem}

\begin{algorithm}
\caption{Decoding algorithm for single BAD/BAR errors}
\begin{algorithmic}[1]
\STATE Input: $a \in \mathbb{Z}_{\geq 0}, n$ and $m \in \mathbb{Z}_{\geq 2}$, and 
${\bf y} \in \bigcup_{t \ge 0} \binset^t$. \\ 
Output: ${\bf z} \in \binset^n$ or a symbol ?. 
\STATE Compute the length of ${\bf y}$, say $|{\bf y}|$.
\IF {$|{\bf y}|  = n$} 
  \STATE Go to 10.
\ELSIF {$|{\bf y}| = n-2$}
  \STATE go to 21.
\ELSE  
  \STATE  ${\bf z} \coloneqq ?$. Go to 43.
\ENDIF
\STATE Compute $r \coloneqq \min \{s \in \mathbb{Z}_{\geq 0}
\mid s\equiv a-\tau({\bf y}) \pmod {m} \}$.
\IF{$r=0$}
  \STATE ${\bf z} \coloneqq {\bf y}$. Go to 43.
\ELSE
  \STATE Compute $p \coloneqq n - \min \{r,m-r\}$.
  \IF{$\BAR_{p}({\bf y}) \in \azinv$}
    \STATE ${\bf z} \coloneqq  \BAR_{p}({\bf y})$. Go to 43.
  \ELSE
    \STATE ${\bf z} \coloneqq ?$.  Go to 43.
  \ENDIF  
\ENDIF
\STATE Compute $r \coloneqq \min \{s \in \mathbb{Z}_{\geq 0}
 \mid s\equiv a-\tau({\bf y}) \pmod {m} \}$.
\STATE Compute $w \coloneqq \mathrm{wt}({\bf \tilde{y}})$.
\IF{$r\leq w$}
  \STATE Compute $p \coloneqq \max \{j \in [n-1] \mid L^{(1)}(j,{\bf \tilde{y}})=r) \}$.
  \IF{$p$: even}
    \STATE $b \coloneqq 10$.
  \ELSE
    \STATE $b  \coloneqq 01$.
  \ENDIF  
\ELSE 
  \STATE Compute \\ $p \coloneqq \min \{j \in [n-1] \mid 
  R^{(0)}(j,{\bf \tilde{y}})=r-w-1) \}$.
  \IF{$p$: even}  
    \STATE $b \coloneqq 01$.
  \ELSE
    \STATE $b \coloneqq 10$.
  \ENDIF    
\ENDIF
\IF {$\ins_{p,{\bf b}}({\bf y}) \in \azinv$}
  \STATE ${\bf z} \coloneqq \ins_{p,{\bf b}}({\bf y})$.
\ELSE
  \STATE ${\bf z} \coloneqq ?$. 
\ENDIF
\STATE Output {\bf z}.
\end{algorithmic}
\end{algorithm}

Before we move to proofs for Theorem III.3, we provide examples of Algorithm 2 for each error.

\begin{expl}[Decoding for a single BAD error]
Let $n=5, a=0, m=5,$ and ${\bf y}=101$.
Note that the azinv code has six words,
\begin{equation*} 
C_{0,5}(5)=\{01000,01010,01010,01011,01111,10110,10001\}.
\end{equation*}
\begin{itemize}
\item Since $|{\bf y}| =3$, we go to step 6 and then to step 21.
\item Algorithm 2 computes $r$ and $w$, 
then we obtain $r=3$ and $w=3.$
\item Since $r \leq w$, then
\begin{align*}
p&=\max \{j \in [4] \mid L^{(1)}(j,{\bf \tilde{y}})=3) \} \\
  &=4.
\end{align*}
\item Since $p=4$ is even, then $b=10$.
\item The output $\mathrm{Dec}_{A}({\bf y})=\ins_{p,b}({\bf y})$ is  
\begin{equation*}
\ins_{4,10}(101)=10110.
\end{equation*}
\end{itemize}
\end{expl}

\begin{expl}[Decoding for a single BAR error]
Let $n=6, a=0, m=10,$ and ${\bf y}=100000$.
Note that the azinv code has five words,
\begin{equation*} 
C_{0,10}(6)=\{010000,010100,010101,010111,011111\}.
\end{equation*}
\begin{itemize}
\item Since $|{\bf y}| =6$, we go to step 4 and then to step 10.
\item Algorithm 2 computes $r$, then $r=5$.
\item Since $r \neq 0$, 
\begin{align*}
p&=6-\min\{5,5\} \\
  &=1.
\end{align*}
\item The output $\mathrm{Dec}_{A}({\bf y})=\BAR_{p}({\bf y})$ is  
\begin{equation*}
\mathrm{BR}_{1}(100000)=010000.
\end{equation*}
\end{itemize}
\end{expl}

\subsection{Proof for single BAD error-correction}\label{proof_BAD}

To prove 2 of Theorem III.3, we introduce the following three Lemmas \ref{inequality_BAD}, 
\ref{partion_BAD} and \ref{remainder_BAD}
From now on, till the end of this subsection, we assume the following.
\begin{hyp}\label{hyp_BAD}
A binary sequence ${\bf x}$ is a codeword of $\azinv$. 
Set ${\bf y} \coloneqq \BAD_i ({\bf x})$ for a fixed $i$.
 $r$ is the value at the step 21 of Algorithm 2.
\end{hyp}

\begin{lem}\label{inequality_BAD}
The following four inequalities hold.
\begin{align}
\setcounter{equation}{0}
0&\leq L^{(1)}(i,{\bf y})\\ 
  &\leq \mathrm{wt}({\bf y})\\
  &< 1+\mathrm{wt}({\bf y})+R^{(0)}(i,{\bf y})\\ 
  &< n.
  \end{align}
\end{lem}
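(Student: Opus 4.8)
The plan is to follow the proof of Lemma~\ref{inequality} almost verbatim, specializing to the weight sequence ${\bf k}=(1,2,\ldots,n)$ in force throughout this section (so every gap $k_{j+1}-k_j$ equals $1$ and $k_n=n$). The one identity I would record first is the specialization of Lemma~\ref{weight}, namely $\mathrm{wt}({\bf y})=L^{(1)}(i,{\bf y})+R^{(1)}(i,{\bf y})$, which splits the Hamming weight of ${\bf y}$ into its $1$'s strictly left of position $i$ and its $1$'s weakly right of position $i$; it holds by exactly the computation in Lemma~\ref{weight}.

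Granting this, the first three inequalities are immediate. Inequality~(1) holds because $L^{(1)}(i,{\bf y})$ is a sum of nonnegative terms; inequality~(2) follows from the weight identity together with $R^{(1)}(i,{\bf y})\ge 0$; and inequality~(3) is just $1>0$ and $R^{(0)}(i,{\bf y})\ge 0$.

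The only inequality with content is~(4), and the key point is a length count. Since ${\bf y}=\BAD_i({\bf x})$ with ${\bf x}\in\binset^n$, a balanced adjacent deletion removes exactly two symbols, so ${\bf y}$ has length $n-2$; hence the number of $0$'s in ${\bf y}$ equals $(n-2)-\mathrm{wt}({\bf y})$. As $R^{(0)}(i,{\bf y})$ counts only the $0$'s weakly right of position $i$, we have $R^{(0)}(i,{\bf y})\le (n-2)-\mathrm{wt}({\bf y})$, and therefore
\[
1+\mathrm{wt}({\bf y})+R^{(0)}(i,{\bf y})\le 1+\mathrm{wt}({\bf y})+\bigl((n-2)-\mathrm{wt}({\bf y})\bigr)=n-1<n,
\]
which is~(4).

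I expect the difficulty here to be bookkeeping rather than conceptual. The single substantive difference from Lemma~\ref{inequality} is that the deleted block has length two, so ${\bf y}$ has length $n-2$ instead of $n-1$; this is precisely what replaces the non-strict monotone bound $\le k_n=n$ by the strict bound $<n$. The points to watch are that the counts of $0$'s and $1$'s genuinely partition ${\bf y}$, so that the length-$(n-2)$ accounting is exact, and that the boundary cases $i=1$ and $i=n-1$, where one of the one-sided counts is empty, are handled by the conventions $L^{(1)}(1,{\bf y})=0$ and $R^{(0)}(n-1,{\bf y})=0$.
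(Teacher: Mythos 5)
Your proposal is correct and follows essentially the same route as the paper: inequalities (1)--(3) from nonnegativity and the specialization of Lemma~\ref{weight}, and inequality (4) from the fact that ${\bf y}$ has length $n-2$, so $R^{(0)}(i,{\bf y})$ is at most the total number of zeros $(n-2)-\mathrm{wt}({\bf y})$ (the paper phrases this as $R^{(0)}(i,{\bf y})\le R^{(0)}(1,{\bf y})$ and $R^{(1)}(1,{\bf y})+R^{(0)}(1,{\bf y})=n-2$, which is the same count). No gaps.
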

\begin{proof}
The inequality (1) follows from the definition of $L^{(1)}(i,{\bf y})$.
The inequality (2) follows from Lemma \ref{weight} and $R^{(1)}(i,{\bf y}) \geq 0$.
The inequality (3) follows from $R^{(0)}(i,{\bf y}) \geq 0$.
We show the inequality (4).
\begin{align*}
1+\mathrm{wt}({\bf y})+R^{(0)}(i,{\bf y})&=1+R^{(1)}(1,{\bf y})+R^{(0)}(i,{\bf y})\\
          &\leq 1+R^{(1)}(1,{\bf y})+R^{(0)}(1,{\bf y})\\
          &= n-1 \\
          &< n.
\end{align*}
\end{proof}

\begin{lem}\label{partion_BAD}
Either $\tilde{x}_i \tilde{x}_{i+1}=00$ or $\tilde{x}_i \tilde{x}_{i+1}=11$ holds.
\end{lem}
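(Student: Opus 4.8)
The plan is to prove this by a short direct case analysis that hinges on two elementary facts. First I would recall that, by Hypothesis \ref{hyp_BAD}, we have ${\bf y} = \BAD_i({\bf x})$; since $\BAD_i$ is defined only on words with $x_i \neq x_{i+1}$, the codeword ${\bf x}$ must satisfy $x_i \neq x_{i+1}$, so the pair $x_i x_{i+1}$ is either $01$ or $10$. Second, I would use the key structural observation that $i$ and $i+1$ are consecutive integers and therefore have opposite parities: exactly one of them is even. Since the tilde operation negates precisely the even-indexed entries (i.e. $\tilde{x}_j = \overline{x}_j$ for $j$ even and $\tilde{x}_j = x_j$ for $j$ odd), it flips exactly one of the two symbols $x_i, x_{i+1}$ while leaving the other fixed.

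From here the conclusion is immediate by combining these two facts. Flipping exactly one entry of a pair whose two entries already differ must produce a pair whose two entries agree. Concretely, I would split into the parity cases. If $i$ is odd and $i+1$ is even, then $\tilde{x}_i = x_i$ and $\tilde{x}_{i+1} = \overline{x}_{i+1}$; so $x_i x_{i+1} = 01$ gives $\tilde{x}_i \tilde{x}_{i+1} = 00$, while $x_i x_{i+1} = 10$ gives $\tilde{x}_i \tilde{x}_{i+1} = 11$. If instead $i$ is even and $i+1$ is odd, then $\tilde{x}_i = \overline{x}_i$ and $\tilde{x}_{i+1} = x_{i+1}$; so $x_i x_{i+1} = 01$ gives $\tilde{x}_i \tilde{x}_{i+1} = 11$, and $x_i x_{i+1} = 10$ gives $\tilde{x}_i \tilde{x}_{i+1} = 00$. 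In every one of the four subcases the value $\tilde{x}_i \tilde{x}_{i+1}$ is either $00$ or $11$, which is exactly the claim.

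I do not expect any genuine obstacle here; the argument is pure bookkeeping over four subcases, and the only care needed is in correctly tracking which index is even under the definition of the tilde map. The entire conceptual content is the single remark that negating exactly one coordinate of an unequal pair yields an equal pair, so the balanced (differing) adjacent pair $x_i x_{i+1}$ is transformed into a constant block $\tilde{x}_i \tilde{x}_{i+1}$. This structural fact is precisely what makes the tilde normalization useful for the later inversion-number computations in the proof of part 2 of Theorem III.3.
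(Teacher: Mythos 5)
Your proposal is correct and follows exactly the same route as the paper's proof: deduce $x_i \neq x_{i+1}$ from the domain of $\BAD_i$, then observe that the tilde map negates exactly one of the two consecutive entries, turning an unequal pair into an equal one. The only difference is that you write out the four parity subcases explicitly, whereas the paper compresses them into ``whichever $i$ is odd or even.''
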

\begin{proof}
By the assumption ${\bf y} = \BAD_i ({\bf x})$, $x_i \neq x_{i+1}$ holds.
Thus, either $x_ix_{i+1}=01$ or $x_{i}x_{i+1}=10$ holds.
Therefofre, whichever $i$ is odd or even, 
either $\tilde{x}_i \tilde{x}_{i+1}=00$ or $\tilde{x}_i \tilde{x}_{i+1}=11$ holds.
\end{proof}

\begin{lem}\label{remainder_BAD}
\begin{equation*}r=
\begin{cases}
L^{(1)}(i,{\bf \tilde{y}}) &(\tilde{x}_i \tilde{x}_{i+1}=00),\\
1+\mathrm{wt}({\bf \tilde{y}})+R^{(0)}(i,{\bf \tilde{y}}) & (\tilde{x}_i \tilde{x}_{i+1}=11).
\end{cases}
\end{equation*}
\end{lem}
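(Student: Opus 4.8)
The plan is to follow the template of Lemma \ref{remainder}: reduce $r$ to a residue modulo $m$, evaluate that residue as an explicit integer, and then use Lemma \ref{inequality_BAD} to confirm it lies in $[0,m)$ so that the congruence becomes an equality. Concretely, I would first invoke the codeword hypothesis: since ${\bf x} \in \azinv$, we have $\tau({\bf x}) \equiv a \pmod m$, so by the definition of $r$ at step 21 of Algorithm 2,
\[
r \equiv a - \tau({\bf y}) \equiv \tau({\bf x}) - \tau({\bf y}) = \mathrm{inv}(\sigma^{-1}({\bf x})) - \mathrm{inv}(\sigma^{-1}({\bf y})) \pmod m .
\]
Writing ${\bf x} = \ins_{i, x_i x_{i+1}}({\bf y})$, the task becomes to count exactly the inversions of $\sigma^{-1}({\bf x})$ that involve the reinserted complementary pair $x_i x_{i+1}$, since all inversions among the surviving coordinates are preserved.

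Next I would make the permutation explicit. Iterating the recursion $\sigma^{-1}(y_1 \cdots y_n) = y_1\, \sigma^{-1}({\bf y}_{[3,n]})\, y_2$ shows that $\sigma^{-1}({\bf x})$ lists the odd-indexed symbols of ${\bf x}$ in increasing order of index, followed by the even-indexed symbols in decreasing order of index. The two deleted coordinates $i$ and $i+1$ have opposite parities, while every coordinate beyond them is shifted by two and hence keeps its parity; therefore $\sigma^{-1}({\bf x})$ is obtained from $\sigma^{-1}({\bf y})$ by inserting exactly one symbol into the odd (front) block and one symbol into the even (back) block, with the front symbol preceding the back symbol. Analyzing the two parities of $i$ together with Lemma \ref{partion_BAD} pins down these inserted symbols: when $\tilde{x}_i \tilde{x}_{i+1} = 00$ the front symbol is $0$ and the back symbol is $1$, and when $\tilde{x}_i \tilde{x}_{i+1} = 11$ the front symbol is $1$ and the back symbol is $0$.

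With the inserted symbols located, I would count the destroyed inversions as a sum of three contributions: the inversion between the two inserted symbols, the inversions of the front symbol against the existing word, and those of the back symbol against the existing word. In the case $\tilde{x}_i \tilde{x}_{i+1} = 00$ the inserted pair $(0,1)$ forms no inversion with itself, and the remaining two contributions amount to the ones lying to the left of the insertion, which the tilde operation repackages as $L^{(1)}(i, {\bf \tilde{y}})$. In the case $\tilde{x}_i \tilde{x}_{i+1} = 11$ the inserted pair $(1,0)$ contributes the extra inversion accounting for the summand $1$, while the other two contributions become $\mathrm{wt}({\bf \tilde{y}}) + R^{(0)}(i, {\bf \tilde{y}})$. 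Finally, Lemma \ref{inequality_BAD} shows that both candidate expressions lie in $[0,n) \subseteq [0,m)$ under the standing assumption $n \leq m$, and since $r$ is by definition the representative of its residue class in $[0,m)$, the congruence upgrades to the stated equality.

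The step I expect to be the main obstacle is the inversion bookkeeping: verifying that the tilde operation---whose role is precisely to absorb the complementation and the reversal suffered by the even-indexed block under $\sigma^{-1}$---correctly merges the ``$1$s to the left in the front block'' and the ``$0$s to the right in the reversed back block'' into the single left-count statistic $L^{(1)}(i, {\bf \tilde{y}})$, and symmetrically into the right-count statistic $R^{(0)}(i, {\bf \tilde{y}})$. This quadratic inversion accounting, in place of the linear weight computation of Lemma \ref{remainder}, is the genuinely new content; once it is settled, the passage from congruence to equality is routine.
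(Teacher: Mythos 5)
Your proposal is correct and follows essentially the same route as the paper: reduce $r$ to the congruence $r \equiv \mathrm{inv}(\sigma^{-1}({\bf x})) - \mathrm{inv}(\sigma^{-1}({\bf y})) \pmod m$, use the recursion for $\sigma^{-1}$ to place the two reinserted symbols one in the odd (front) block and one in the reversed even (back) block, count the new inversions case-by-case on the parity of $i$ and the value of $x_i x_{i+1}$, recognize the totals as $L^{(1)}(i,{\bf \tilde{y}})$ and $1+\mathrm{wt}({\bf \tilde{y}})+R^{(0)}(i,{\bf \tilde{y}})$, and invoke Lemma \ref{inequality_BAD} to upgrade the congruence to an equality. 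Your grouping of the count into three contributions (the inserted pair's self-inversion plus the front and back symbols' inversions against the surviving word) is just a cleaner organization of the same four-way computation the paper writes out explicitly.
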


\begin{proof}
It follows from the definitions of $r$ and $a$.
\begin{align*}
r &\equiv a - \tau({\bf y}) \pmod m \\
  & = a - \mathrm{inv}(\sigma^{-1}({\bf y})) \\
  & \equiv \mathrm{inv}(\sigma^{-1}({\bf x})) - \mathrm{inv}(\sigma^{-1}({\bf y}))  \pmod m \\
 \end{align*}
 By the assumption ${\bf y}=\mathrm{BAD}_i({\bf x})$, the following holds.
 \begin{equation*}
{\bf x}=\ins_{i,x_ix_{i+1}}({\bf y})=y_1y_2\cdots y_{i-1}x_{i}x_{i+1}y_{i}y_{i+1}\cdots y_{n-2}.
\end{equation*}
Therefore, we have
\begin{align*}
&\mathrm{inv}(\sigma^{-1}({\bf x})) - \mathrm{inv}(\sigma^{-1}({\bf y})) \\
  & = \mathrm{inv}(\sigma^{-1}(\ins_{i,x_ix_{i+1}}({\bf y}))) - \mathrm{inv}(\sigma^{-1}({\bf y})) \\
  &= \mathrm{inv}(\sigma^{-1}(y_1y_2\cdots y_{i-1} x_ix_{i+1}y_iy_{i+1}\cdots y_{n-2})) \\
  &- \mathrm{inv}(\sigma^{-1}(y_1y_2\cdots y_{i-1}\quad \quad \quad y_iy_{i+1}\cdots y_{n-2})) \\ 
  \end{align*}
\begin{align*}
&= \left \{
\begin{array}{cc}
 \mathrm{inv}(y_1 y_3 \cdots y_{i-2} x_i  \sigma^{-1}({\bf y}_{[i,n-2]}) x_{i+1} y_{i-1}\cdots y_4 y_2) \\ \\
  - \mathrm{inv}(y_1 y_3 \cdots y_{i-2} \quad \sigma^{-1}({\bf y}_{[i,n-2]}) \quad y_{i-1}\cdots y_4 y_2)\\
  (i:\text{odd}), \\ \\ \\
\mathrm{inv}(y_1 y_3 \cdots y_{i-1} x_{i+1}  \sigma^{-1}({\bf y}_{[i+1,n-2]}) y_i x_i \cdots y_4 y_2) \\ \\
  - \mathrm{inv}(y_1 y_3 \cdots y_{i-1} \quad \sigma^{-1}({\bf y}_{[i+1,n-2]}) y_i \quad \cdots y_4 y_2)\\
  (i:\text{even})
\end{array}
\right. \\
\end{align*}
\begin{align*}
&= \left \{
\begin{array}{cc}
 \mathrm{inv}(y_1 y_3 \cdots y_{i-2} \quad 0  \sigma^{-1}({\bf y}_{[i,n-2]}) 1 \quad y_{i-1}\cdots y_4 y_2) \\ \\
  - \mathrm{inv}(y_1 y_3 \cdots y_{i-2} \quad \sigma^{-1}({\bf y}_{[i,n-2]}) \quad y_{i-1}\cdots y_4 y_2)\\
  (i:\text{odd}, x_ix_{i+1}=01), \\ \\ \\
  \mathrm{inv}(y_1 y_3 \cdots y_{i-1}\quad 0  \sigma^{-1}({\bf y}_{[i+1,n-2]}) y_i 1\quad \cdots y_4 y_2) \\ \\
  - \mathrm{inv}(y_1 y_3 \cdots y_{i-1} \quad \sigma^{-1}({\bf y}_{[i+1,n-2]}) y_i \quad \cdots y_4 y_2)\\
  (i:\text{even}, x_ix_{i+1}=10), \\ \\ \\
  \mathrm{inv}(y_1 y_3 \cdots y_{i-2}\quad 1  \sigma^{-1}({\bf y}_{[i,n-2]}) 0 \quad y_{i-1}\cdots y_4 y_2) \\ \\
  - \mathrm{inv}(y_1 y_3 \cdots y_{i-2} \quad \sigma^{-1}({\bf y}_{[i,n-2]}) \quad y_{i-1}\cdots y_4 y_2)\\
  (i:\text{odd}, x_ix_{i+1}=10), \\ \\ \\
\mathrm{inv}(y_1 y_3 \cdots y_{i-1}\quad 1  \sigma^{-1}({\bf y}_{[i+1,n-2]}) y_i 0 \quad \cdots y_4 y_2) \\ \\
  - \mathrm{inv}(y_1 y_3 \cdots y_{i-1} \quad \sigma^{-1}({\bf y}_{[i+1,n-2]}) y_i \quad \cdots y_4 y_2)\\
  (i:\text{even}, x_ix_{i+1}=01)
\end{array}
\right. \\
\end{align*}
\begin{align*}
&= \left \{
\begin{array}{ll}
\displaystyle \sum_{j=1,j:\text{odd}}^{i-2} y_j + \sum_{j=2,j:\text{even}}^{i-1} \overline{y}_j  \\
(i:\text{odd}, x_ix_{i+1}=01), \\ \\
\displaystyle \sum_{j=1,j:\text{odd}}^{i-1} y_j + \sum_{j=2,j:\text{even}}^{i-2} \overline{y}_j  \\
(i:\text{even}, x_ix_{i+1}=10), \\ \\
\displaystyle \sum_{j=1,j:\text{odd}}^{i-2} y_j + \sum_{j=2,j:\text{even}}^{i-1} \overline{y}_j  + (n-2-(i-1))+1\\
 (i:\text{odd}, x_ix_{i+1}=10), \\ \\
 \displaystyle \sum_{j=1,j:\text{odd}}^{i-1} y_j + \sum_{j=2,j:\text{even}}^{i-2} \overline{y}_j  + (n-2-(i-1))+1\\
 (i:\text{even}, x_ix_{i+1}=01) \\
\end{array}
\right. \\
\end{align*}
\begin{align*}
&= \left \{
\begin{array}{cc}
\displaystyle \sum_{j=1}^{i-1} \tilde{y}_j \\
(i:\text{odd}, x_ix_{i+1}=01 \quad \text{or} \quad i:\text{even}, x_ix_{i+1}=10), \\  \\
\displaystyle \sum_{j=1}^{i-1} \tilde{y}_j + \sum_{j=i}^{n-2} (\tilde{y}_j+\overline{\tilde{y}}_j) +1 \\
 (i:\text{odd}, x_ix_{i+1}=10 \quad \text{or} \quad i:\text{even}, x_ix_{i+1}=01)
\end{array}
\right. \\ \\
&= \left \{
\begin{array}{cc}
L^{(1)}(i,{\bf \tilde{y}}) \\
(\tilde{x}_i \tilde{x}_{i+1}=00), \\  \\
L^{(1)}(i,{\bf \tilde{y}})+R^{(1)}(i,{\bf \tilde{y}})+R^{(0)}(i,{\bf \tilde{y}})+1 \\
 (\tilde{x}_i \tilde{x}_{i+1}=11)
\end{array}
\right. \\ 
&= \left \{
\begin{array}{cc}
L^{(1)}(i,{\bf \tilde{y}}) &(\tilde{x}_i \tilde{x}_{i+1}=00), \\
1+\mathrm{wt}({\bf \tilde{y}})+R^{(0)}(i,{\bf \tilde{y}}) & (\tilde{x}_i \tilde{x}_{i+1}=11).
\end{array}
\right. \\
\end{align*} 

The last follows from Lemma \ref{weight}.
We show that the equality holds in the cases of  $\tilde{x}_i\tilde{x}_{i+1} = 00$ and $\tilde{x}_i\tilde{x}_{i+1} = 11$.

Case $\tilde{x}_i\tilde{x}_{i+1} = 00$: we have shown 
\begin{equation*}
r \equiv L^{(1)}(i,{\bf \tilde{y}}) \pmod m .
\end {equation*}
On the other hand, the following inequalities hold by Lemma \ref{inequality_BAD}.
\begin{equation*}
0 \leq  L^{(1)}(i,{\bf \tilde{y}}) < n <m. 
\end{equation*} 
By the definition of $r$, 
\begin{equation*}
0 \leq r <m.
\end{equation*}
This implies $r = L^{(1)}(i,{\bf \tilde{y}})$.

Case $\tilde{x}_i\tilde{x}_{i+1} = 11$: 
We have shown 
\begin{equation*}
r \equiv 1+\mathrm{wt}({\bf \tilde{y}})+R^{(0)}(i,{\bf \tilde{y}}) \pmod m .
\end{equation*}
On the other hand, the following inequalities hold by Lemma \ref{inequality_BAD}.
\begin{equation*}
0 \leq 1+\mathrm{wt}({\bf \tilde{y}})+R^{(0)}(i,{\bf \tilde{y}}) \leq n < m.
\end{equation*}
By the definition of $r$, 
\begin{equation*}
0 \leq r < m.
\end{equation*}
This implies $r=1+\mathrm{wt}({\bf \tilde{y}})+R^{(0)}(i,{\bf \tilde{y}})$.
Therefore, Lemma \ref{remainder_BAD} holds.
\end{proof}

\begin{proof}[proof of 2 of Theorem III.3]
Let us focus on the step 23 of Algorithm 2.
The case of $r \leq w$ and the case of $r > w$ are shown separately.

In the case of $r \leq w$, we have $\mathrm{Dec}_{A}({\bf y}) = \ins_ {p, b} ({\bf y})$.
We show $\ins_ {p, b} ({\bf y}) = {\bf x}$. 
First, we can show that $r = L^{(1)}(i,{\bf \tilde{y}})$ holds, 
similarly to the case of Monotone codes.

Next, we show that the deleted symbols $ x_ix_{i+1}$ satisfy $\tilde{x}_i\tilde{x}_{i+1}= 00$.
This is shown by contradiction.
Since $x_ix_{i+1} \in \{01,10 \}$, either $\tilde{x}_i\tilde{x}_{i+1} =00 $ or $\tilde{x}_i\tilde{x}_{i+1} =11$ holds.
Assume that $\tilde{x}_i\tilde{x}_{i+1} = 11$ holds. Lemma \ref{remainder_BAD} implies 
\begin{equation*}
r=1+\mathrm{wt}({\bf \tilde{y}})+R^{(0)}(i,{\bf \tilde{y}}),
 \end{equation*} which contradicts to
\begin{equation*}
r=L^{(1)}(i,{\bf \tilde{y}}).
\end{equation*}
Therefore, $\tilde{x}_i\tilde{x}_{i+1} = 00$.

Finally, we show that $\ins_ {p,b}({\bf y}) = {\bf x}$.
We showed that the deleted symbols $x_ix_{i+1}$ satisfy $\tilde{x}_i\tilde{x}_{i+1}=00$.
By the assumption ${\bf y} = \BAD_i ({\bf x})$,  we have ${\bf x} =\ins_ {i,x_ix_{i+1}} ({\bf y})$. 
Therefore, it suffices to prove that $\ins_ {p,b} ({\bf y}) = \ins_ {i, x_ix_{i+1}} ({\bf y})$. 
Furthermore we will show 
\begin{equation*}
0=\tilde{y}_i=\tilde{y}_{i+1}=\cdots=\tilde{y}_{p-1}.
\end{equation*}

Since $r=L^{(1)}(i,{\bf \tilde{y}})$, then
\begin{equation*}
i\in \{j \in [n-1] \mid L^{(1)}(j,{\bf \tilde{y}})=r) \}
\end{equation*}
holds. Since
\begin{equation*}
p\in \{j \in [n-1] \mid L^{(1)}(j,{\bf \tilde{y}})=r) \},
\end{equation*}
then
\begin{align*}
r=L^{(1)}(i,{\bf \tilde{y}})=L^{(1)}(p,{\bf \tilde{y}})
\end{align*}
holds. Therefore,
\begin{align*}
\sum_{j=1}^{i-1} \tilde{y}_j=\sum_{j=1}^{p-1} \tilde{y}_j
\end{align*}
holds. In the case of $r \leq w$,  $i \leq p$ follows from the definition of $p$. 
Since $i \leq p$, then
\begin{align*}
0&=\sum_{j=1}^{p-1} \tilde{y}_j-\sum_{j=1}^{i-1} \tilde{y}_j\\
&=\sum_{j=i}^{p-1} \tilde{y}_j
\end{align*}
holds. Therefore, we have
\begin{equation*}
0=\tilde{y}_i=\tilde{y}_{i+1}=\cdots=\tilde{y}_{p-1}.
\end{equation*}

By a similar argument, we can prove in the remaining case $ r > w$. 
\end{proof}

\subsection{Proof for single BAR error-correction}\label{proof_BAR}

To prove 4 of Theorem III.3, we introduce the following two Lemmas \ref{partion_BAR} and
\ref{remainder_BAR}.
From now on, till the end of this subsection, we assume the following.
\begin{hyp}\label{hyp_BAR}
A binary sequence ${\bf x}$ is a codeword of $\azinv$ with $m \geq 2(n-1)$. 
Set ${\bf y} \coloneqq \BAR_i ({\bf x})$ for a fixed $i$.
$r$ is the value at the step 10 of Algorithm 2.
\end{hyp}

\begin{lem}\label{partion_BAR}
Either $\tilde{y}_i \tilde{y}_{i+1}=00$ or $\tilde{y}_i \tilde{y}_{i+1}=11$ holds.
\end{lem}
\begin{proof}
By the assumption ${\bf y} = \BAR_i ({\bf x})$, 
$x_i \neq x_{i+1}$ holds.
Hence, $y_i \neq y_{i+1}$ holds.
Thus, either $y_iy_{i+1}=01$ or $y_{i}y_{i+1}=10$ holds.
Therefofre, whichever $i$ is odd or even, 
either $\tilde{y}_i \tilde{y}_{i+1}=00$ or $\tilde{y}_i \tilde{y}_{i+1}=11$ holds.
\end{proof}

\begin{lem}\label{remainder_BAR}
\begin{equation*}r=
\begin{cases}
n-i &(\tilde{y}_i \tilde{y}_{i+1}=00),\\
m-(n-i)& (\tilde{y}_i \tilde{y}_{i+1}=11)
\end{cases}
\end{equation*}
and $r \neq 0$.
\end{lem}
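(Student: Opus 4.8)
The plan is to follow the template of Lemma~\ref{remainder_rev}, but to replace the single-coordinate analysis by a two-coordinate \emph{swap} analysis adapted to the inversion statistic $\tau=\mathrm{inv}\circ\sigma^{-1}$. First I would reduce everything to a difference of inversion numbers: since ${\bf x}\in\azinv$, the value $r$ computed at step~10 satisfies $r\equiv a-\tau({\bf y})\equiv \tau({\bf x})-\tau({\bf y})=\mathrm{inv}(\sigma^{-1}({\bf x}))-\mathrm{inv}(\sigma^{-1}({\bf y}))\pmod m$. Because $\BAR_i$ is an involution on its domain and $x_i\neq x_{i+1}$ forces $y_i\neq y_{i+1}$, the word ${\bf x}=\BAR_i({\bf y})$ is precisely ${\bf y}$ with the two \emph{distinct} entries at coordinates $i$ and $i+1$ interchanged (flipping two unequal bits is the same as transposing them).

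The key observation, which makes this cleaner than the BAD computation of Lemma~\ref{remainder_BAD}, is that $\sigma^{-1}$ permutes coordinate positions only and never alters values. Hence $\sigma^{-1}({\bf x})$ is obtained from $\sigma^{-1}({\bf y})$ by transposing the entries at the two image positions $P,Q$ of $i,i+1$. A transposition of two unequal bits at positions $P<Q$ changes $\mathrm{inv}$ by $+(Q-P)$ when the smaller position carries a $0$, and by $-(Q-P)$ otherwise: the pair $(P,Q)$ contributes $\pm1$, each coordinate strictly between $P$ and $Q$ contributes exactly one changed inversion, and all coordinates outside $[P,Q]$ cancel because the swapped multiset is unchanged. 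Therefore $r\equiv\pm(Q-P)\pmod m$, and it remains to identify $Q-P$ and the sign.

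To locate $P$ and $Q$ I would use the interleaving description of $\sigma$ (equivalently the recursion $\sigma^{-1}(y_1\cdots y_n)=y_1\,\sigma^{-1}({\bf y}_{[3,n]})\,y_2$), from which $\sigma^{-1}$ sends an odd position $2t-1$ to position $t$ and an even position $2t$ to position $n+1-t$. Since $i$ and $i+1$ have opposite parities their images lie in opposite blocks, and a short computation gives $|Q-P|=n-i$ in both parity cases. The sign is the delicate part: I would show that the smaller image carries a $0$ in $\sigma^{-1}({\bf y})$ (the $+$ case) exactly when $\tilde{y}_i\tilde{y}_{i+1}=00$, treating $i$ odd and $i$ even separately and invoking the parity-dependent definition of $\tilde{y}$. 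This matching of the transposition sign with the $\tilde{\cdot}$-convention is where I expect the main obstacle to lie, just as the four-way case split was the crux of Lemma~\ref{remainder_BAD}; Lemma~\ref{partion_BAR} guarantees that $\tilde{y}_i\tilde{y}_{i+1}\in\{00,11\}$ are the only possibilities.

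Finally I would pin $r$ down exactly. The case $\tilde{y}_i\tilde{y}_{i+1}=00$ gives $r\equiv n-i$ and the case $\tilde{y}_i\tilde{y}_{i+1}=11$ gives $r\equiv -(n-i)\equiv m-(n-i)\pmod m$. Since $1\le i\le n-1$ we have $0<n-i\le n-1$, and the hypothesis $m\ge 2(n-1)$ yields $n-1<m$; hence $0<n-i<m$ and also $0<m-(n-i)<m$. By the minimality defining $r$ these representatives equal $r$ on the nose, which simultaneously establishes the claimed formula and the assertion $r\neq 0$.
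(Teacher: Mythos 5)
Your proposal is correct and reaches the same conclusion by the same overall skeleton as the paper: reduce $r$ to $\mathrm{inv}(\sigma^{-1}({\bf x}))-\mathrm{inv}(\sigma^{-1}({\bf y}))$ modulo $m$, recognize the BAR as a swap of the two unequal adjacent bits, evaluate the difference as $\pm(n-i)$ according to $\tilde{y}_i\tilde{y}_{i+1}$, and then remove the modulus via $0<n-i\leq n-1<m$. Where you genuinely depart from the paper is in the middle step. The paper writes out $\sigma^{-1}({\bf x})$ and $\sigma^{-1}({\bf y})$ explicitly and compares inversion numbers through a four-way case split on the parity of $i$ and the value of $y_iy_{i+1}$. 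You instead factor the computation into two reusable facts: (i) a transposition of two unequal bits at positions $P<Q$ changes $\mathrm{inv}$ by exactly $\pm(Q-P)$ (the pair $(P,Q)$ contributes $\pm1$, each of the $Q-P-1$ intermediate coordinates contributes $\pm1$, everything else cancels), and (ii) $\sigma^{-1}$ sends position $2t-1$ to $t$ and position $2t$ to $n+1-t$, so the images of $i$ and $i+1$ are always at distance $n-i$. I checked the sign bookkeeping: in both parity cases the smaller image position holds a $0$ in $\sigma^{-1}({\bf y})$ precisely when $\tilde{y}_i\tilde{y}_{i+1}=00$, matching the paper's case table. Your modularization is arguably cleaner and would generalize more readily (e.g.\ to other interleaving permutations), at the cost of having to state and verify the transposition lemma; the paper's brute-force expansion avoids any auxiliary lemma but buries the $\pm(n-i)$ mechanism inside the case analysis. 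The final bounding step ($m\geq 2(n-1)$ giving $n-1<m$, hence exact equality and $r\neq 0$) is identical to the paper's.
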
   
\begin{proof}
It follows from the definitions of $r$ and $a$.
\begin{align*}
r &\equiv a - \tau({\bf y}) \pmod m \\
  & = a - \mathrm{inv}(\sigma^{-1}({\bf y})) \\
  & \equiv \mathrm{inv}(\sigma^{-1}({\bf x})) - \mathrm{inv}(\sigma^{-1}({\bf y})) \pmod m \\
\end{align*}
By the assumption ${\bf y}=\BAR_i({\bf x})$, the following holds.
\begin{equation*}
{\bf x}=\BAR_i({\bf y})=y_1y_2\cdots y_{i-1}{y}_{i+1}y_i\cdots y_n.
\end{equation*}
Therefore, we have
\begin{align*}
&\mathrm{inv}(\sigma^{-1}({\bf x})) - \mathrm{inv}(\sigma^{-1}({\bf y})) \\
  & = \mathrm{inv}(\sigma^{-1}(\BAR_i({\bf y}))) - \mathrm{inv}(\sigma^{-1}({\bf y})) \\
  &= \mathrm{inv}(\sigma^{-1}(y_1y_2\cdots y_{i-1}y_{i+1}y_i \cdots y_{n})) \\
  &- \mathrm{inv}(\sigma^{-1}(y_1y_2\cdots y_{i-1}y_iy_{i+1}\cdots y_{n})) \\
  &= \left \{
\begin{array}{cc}
 \mathrm{inv}(y_1 y_3 \cdots y_{i+1} \sigma^{-1}({\bf y}_{[i+2,n]}) y_i y_{i-1}\cdots y_4 y_2) \\ \\
  - \mathrm{inv}(y_1 y_3 \cdots y_i \quad \sigma^{-1}({\bf y}_{[i+2,n]}) y_{i+1} y_{i-1}\cdots y_4 y_2) \\
  (i:\text{odd}), \\ \\ \\
\mathrm{inv}(y_1 y_3 \cdots y_i \sigma^{-1}({\bf y}_{[i+3,n]}) y_{i+2} y_{i+1} \cdots y_4 y_2) \\ \\
  - \mathrm{inv}(y_1 y_3 \cdots y_{i+1}  \sigma^{-1}({\bf y}_{[i+3,n]}) y_{i+2} y_i  \cdots y_4 y_2) \\
  (i:\text{even})
\end{array}
\right. \\
&= \left \{
\begin{array}{cc}
n-i \\
(i:\text{odd}, y_iy_{i+1}=01 \quad \text{or} \quad i:\text{even}, y_iy_{i+1}=10), \\  \\
-(n-i) \\
 (i:\text{odd}, y_iy_{i+1}=10 \quad \text{or} \quad i:\text{even}, y_iy_{i+1}=01)
\end{array}
\right. \\
& \equiv \left \{
\begin{array}{cc}
n-i &(\tilde{y}_i \tilde{y}_{i+1}=00), \\
m-(n-i) & (\tilde{y}_i \tilde{y}_{i+1}=11).
\end{array}
\right. \\
  \end{align*}
We show that the equality holds in the case of $\tilde{y}_i \tilde{y}_{i+1}=00$ and 
$\tilde{y}_i \tilde{y}_{i+1}=11$. 
Since $i \in [n-1]$, we have
$1 \leq n-i \leq n-1 < m$. Then,  
$0 < m-(n-i) \leq m-1 < m$.
Therefore,
\begin{equation*}
r=
\begin{cases}
n-i &(\tilde{y}_i \tilde{y}_{i+1}=00), \\
m-(n-i) & (\tilde{y}_i \tilde{y}_{i+1}=11)
\end{cases}
\end{equation*}
and $r \neq 0$.
\end{proof}

\begin{proof}[proof 4 of Theorem III.3]
Lemma \ref{remainder_BAR} implies $r \neq 0$. Therefore, $\mathrm{Dec}({\bf y}) \neq {\bf y}$.
Let us focus on the step 14 of Algorithm 2.
Lemma \ref{remainder_BAR} implies $r=n-i$ or $r=m-(n-i)$. Whichever $r=n-i$ or $r=m-(n-i)$, 
\begin{align*}
\min \{r,m-r\}&=\min \{n-i, m-(n-i)\} \\
&=n-i
\end{align*}
holds, since $m \geq 2n$. Then, 
\begin{align*}
p&=n- \min \{r,m-r\} \\
  &=n-(n-i) \\
  &=i
\end{align*}
holds.  Thus,
\begin{align*}
\BAR_{p}({\bf y}) &= \BAR_{i}({\bf y}) \\
&={\bf x} \\
& \in \azinv .
\end{align*} 
Therefore, $\mathrm{Dec}({\bf y})=\BAR_{p}({\bf y})={\bf x}$.
\end{proof}

\section{Analysis and unification of the algorithms}\label{comparison}
In this section, we provide a unified view of our proposed algorithms.
After that, we prove that the algorithms can be computed 
in linear time in the code-length.  

Flowchart 1 is the unified representation of 
the parts of Algorithm 1 and Algorithm 2, 
that is, the steps 21st - 34th of Algorithm 1 and the steps 21st - 42th of Algorithm 2.
\begin{figure}[htbp]
\includegraphics[width=18cm,bb=0 0 1890 530]{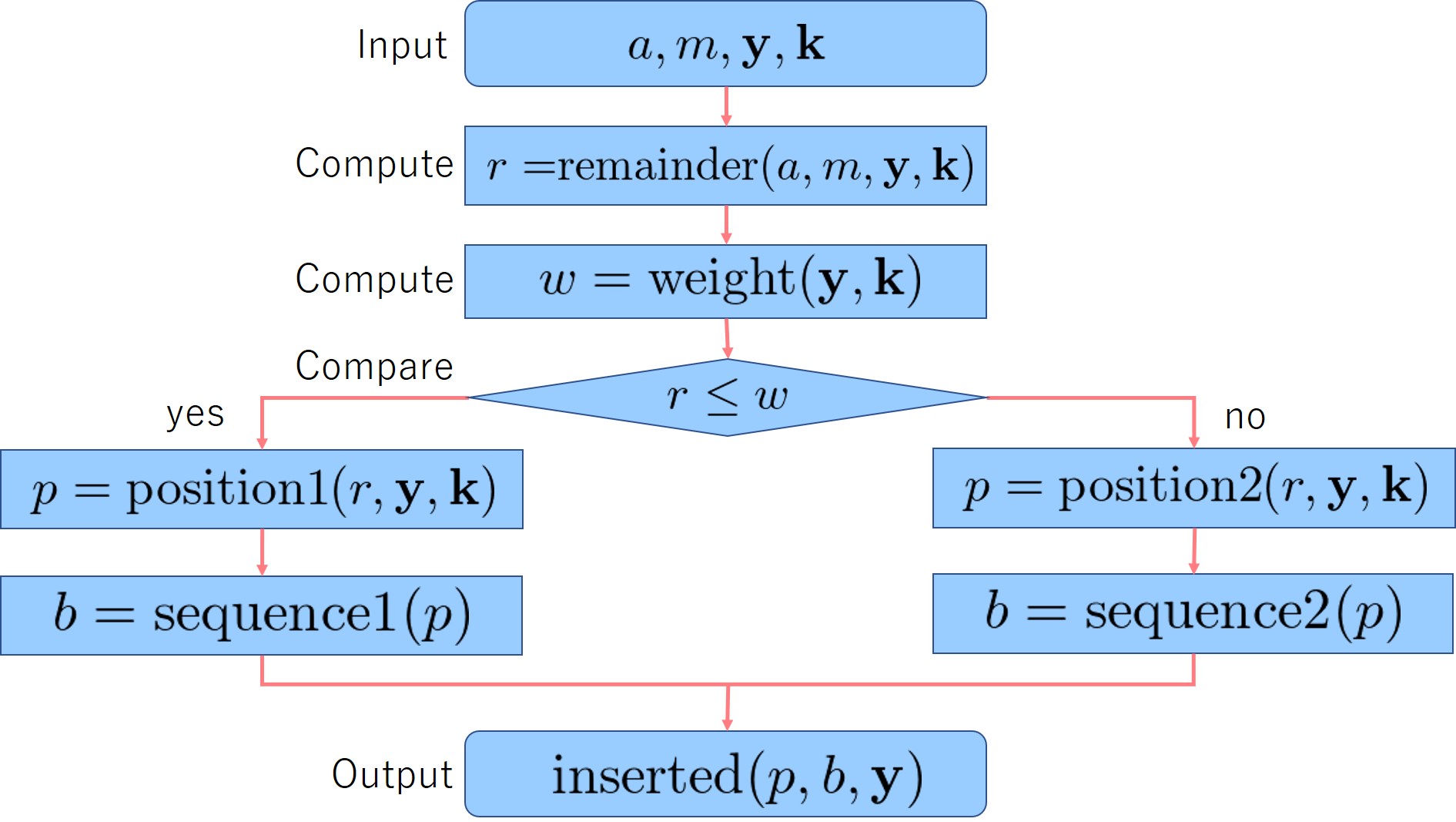}
\caption{Flowchart 1 for deletion error-correction}
\label{flow1}
\end{figure}

The following table summarizes the variables and the functions in Flowchart 1.

\begin{table}[htb]
\begin{center}
\begin{tabular}{|c|c|c|} \hline
& Monotone codes & Azinv codes \\ \hline
the condition of $a$ & $\in \mathbb{Z}_{\geq 0}$ & $\in \mathbb{Z}_{\geq 0}$ \\
the condition of $m$ & $\geq k_n +1$ & $\geq n$ \\
the condition of ${\bf y}$   &  $\in \binset^{n-1}$    & $\in \binset^{n-2}$   \\
the condition of ${\bf k}$&$(k_1,k_2,\cdots, k_n)$&$(1,2,\cdots, n)$\\
$\mathrm{remainder}(a,m,{\bf y},{\bf k})$  & $(a-\rho_{\bf k}({\bf y})) \% m$   & $(a-\tau({\bf y})) \% m$  \\
$\mathrm{weight}({\bf y},{\bf k})$ & $\mathrm{wt}_{\bf k}({\bf y})$ & $\mathrm{wt}({\bf \tilde{y}})$  \\ 
$\mathrm{position1}(r,{\bf y},{\bf k})$& $\max J_{1,1}$ & $\max J_{1,2}$ \\
$\mathrm{position2}(r,{\bf y},{\bf k})$& $\min J_{2,1}$ &  $\min J_{2,2}$ \\
$\mathrm{sequence1}(p)$ & $0$ &  
$\begin{cases}
10(p:\text{even})\\
01(p:\text{odd})
\end{cases}$ \\ 
$\mathrm{sequence2}(p)$ & $1$         &
$\begin{cases}
01(p:\text{even})\\
10(p:\text{odd})
\end{cases}$              \\ 
$\mathrm{inserted}(p,b,{\bf y})$ & $\ins_{p,{\bf b}}({\bf y})$ & $\ins_{p,{\bf b}}({\bf y})$ \\ \hline
  \end{tabular}
  \caption{Variables and Functions in Flowchart 1}
  \end{center}
\end{table}

Here, $(k_1,k_2,\cdots, k_n)$ is  a positive monotonic increasing integer sequence and 
$J_{1,1}$, $J_{1,2}$, $J_{2,1}$ and $J_{2,2}$ are defined as follows.
\begin{align*}
&J_{1,1} \coloneqq 
\{j \in [n] \mid  R^{(1)}_{{\bf k}}(j,{\bf y})=r \}, \\
&J_{1,2} \coloneqq
\{j \in [n-1] \mid L^{(1)}(j,{\bf \tilde{y}})=r \}, \\
&J_{2,1} \coloneqq
\{j \in [n] \mid L^{(0)}_{{\bf k}}(j,{\bf y})=r-w-k_1 \}, \\
&J_{2,2} \coloneqq
\{j \in [n-1] \mid R^{(0)}(j,{\bf \tilde{y}})=r-w-1 \}.
\end{align*}
For integers $a$ and $b$,  $a\% b$ denotes the remainder of $a$ divided by $b$.

Flowchart 2 is the unified representation of 
the parts of Algorithm 1 and Algorithm 2, 
that is, the steps 10th - 20th of Algorithm 1 and Algorithm 2.

\begin{figure}[htbp]
\includegraphics[width=18cm,bb=0 0 2037 403]{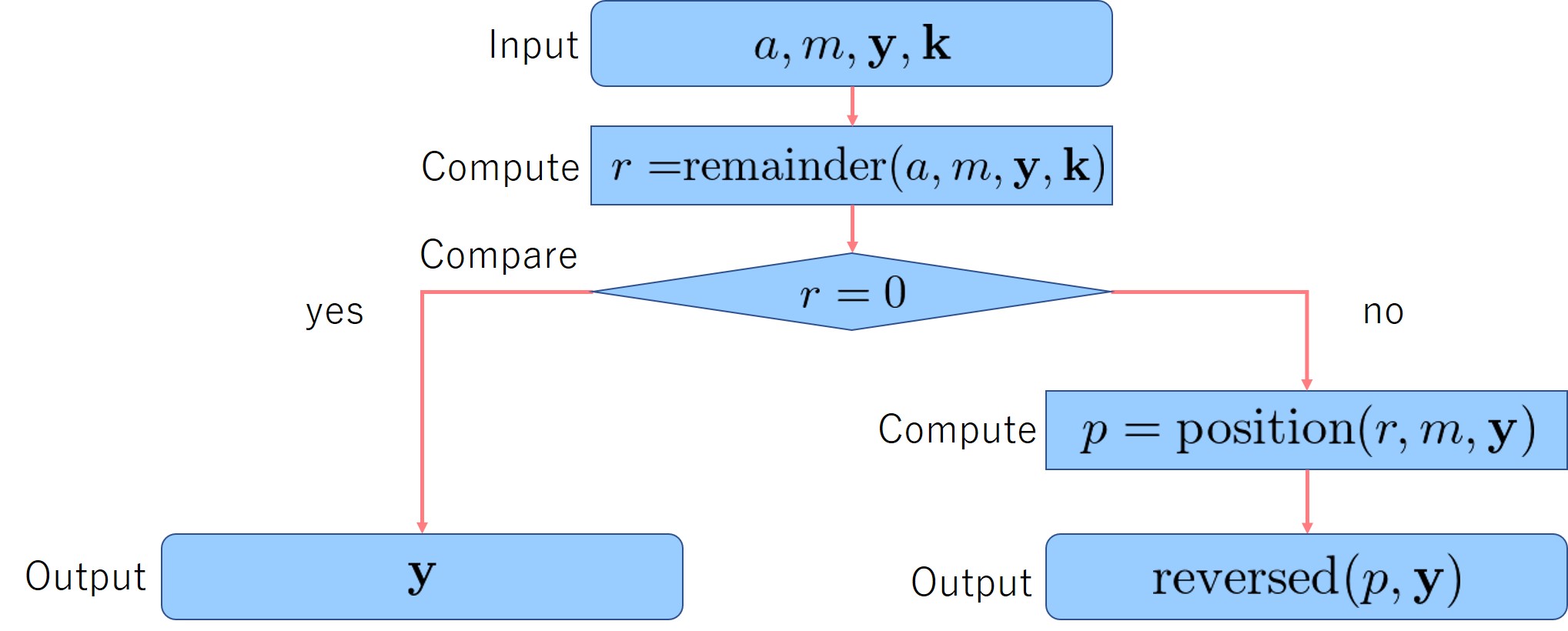}
\caption{Flowchart 2 for reversal error-correction}
\label{flow2}
\end{figure}

The following table summarizes the variables and the functions in Flowchart 2, where  
$(k_1,k_2,\cdots, k_n)$ is  a positive monotonic increasing integer sequence and
$J$ is defined as follows.
\begin{align*}
&J \coloneqq
\{ j \in[n] \mid k_j =\min\{r,m-r\} \}.
\end{align*}

\begin{table}[htb]
\begin{center}
\begin{tabular}{|c|c|c|} \hline
& Monotone codes & Azinv codes \\ \hline
the condition of $a$ & $\in \mathbb{Z}_{\geq 0}$ & $\in \mathbb{Z}_{\geq 0}$ \\
the condition of $m$ & $\geq 2k_n$ & $\geq 2(n-1)$ \\
the condition of ${\bf y}$   &  $\in \binset^{n}$   & $\in \binset^{n}$   \\
the condition of ${\bf k}$&$(k_1,k_2,\cdots, k_n)$&$(1,2,\cdots, n)$\\
$\mathrm{remainder}(a,m,{\bf y},{\bf k})$  & $(a-\rho_{\bf k}({\bf y})) \% m$   & $(a-\tau({\bf y})) \% m$  \\ 
$\mathrm{position}(r,{\bf y},{\bf k})$& $\min J$ & $n-\min \{r,m-r\}$ \\
$\mathrm{reversed}(p,{\bf y})$ & $\rev_{p}({\bf y})$ & $\BAR_{p}({\bf y})$ \\ \hline
  \end{tabular}
  \caption{Variables and Functions in Flowchart 2}
  \end{center}
\end{table}

We can compute $\mathrm{inv}({\bf y})$ in linear time in the length of ${\bf y}$
 by using the following Theorem \ref{inversion_formula}.
\begin{thm}\label{inversion_formula}
Set ${\bf s} \coloneqq (n,n-1,\cdots ,2,1)$. For ${\bf y} \in \binset^{n}$,  
\begin{align*}
\mathrm{inv}({\bf y}) = \rho_{\bf s}({\bf y}) - {{\mathrm{wt}({\bf y})+1} \choose 2}.
\end{align*}
\end{thm}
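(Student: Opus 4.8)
The plan is to prove the identity by a single counting argument that interprets $\rho_{\bf s}({\bf y})$ as a count of position pairs; no induction is needed. First I would unfold the definition, writing $\rho_{\bf s}({\bf y}) = \sum_{i=1}^{n} s_i y_i = \sum_{i=1}^{n} (n+1-i)\,y_i$, and observe that the coefficient $n+1-i$ is exactly the number of indices $j$ with $i \le j \le n$. Restricting the outer sum to the support of ${\bf y}$, this gives
\[
\rho_{\bf s}({\bf y}) = \sum_{i \,:\, y_i = 1} \#\{\, j \mid i \le j \le n \,\},
\]
so that $\rho_{\bf s}({\bf y})$ counts precisely the ordered pairs $(i,j)$ with $y_i = 1$ and $i \le j \le n$.

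Next I would partition this collection of pairs into three disjoint classes and identify each count. The diagonal pairs $j = i$ number exactly $\mathrm{wt}({\bf y})$, one for each $1$ in ${\bf y}$. The pairs with $i < j$ and $y_j = 0$ are, since we work over $\binset$ where $y_i > y_j$ is equivalent to $y_i = 1,\, y_j = 0$, exactly the inversions, hence number $\mathrm{inv}({\bf y})$. The pairs with $i < j$ and $y_j = 1$ run over all unordered pairs of distinct $1$-positions, hence number $\binom{\mathrm{wt}({\bf y})}{2}$. Summing the three classes yields
\[
\rho_{\bf s}({\bf y}) = \mathrm{wt}({\bf y}) + \mathrm{inv}({\bf y}) + \binom{\mathrm{wt}({\bf y})}{2}.
\]

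Finally, writing $w \coloneqq \mathrm{wt}({\bf y})$ and using the elementary identity $w + \binom{w}{2} = \binom{w+1}{2}$, I would rearrange the displayed equation to obtain $\mathrm{inv}({\bf y}) = \rho_{\bf s}({\bf y}) - \binom{w+1}{2}$, which is the claim. I do not expect a genuine obstacle here: the argument is pure bookkeeping, and the only points requiring care are the reduction of the inversion condition to $y_i = 1,\, y_j = 0$ over the binary alphabet and the recognition of the all-ones pairs as $\binom{w}{2}$. As a closing remark I would note that this identity is exactly what delivers the linear-time claim, since $\rho_{\bf s}({\bf y})$ and $\mathrm{wt}({\bf y})$ are each computable in a single pass over ${\bf y}$.
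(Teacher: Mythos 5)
Your proof is correct and is essentially the paper's argument run in the opposite direction: the paper starts from $\mathrm{inv}({\bf y})$ and rewrites $\#\{j>i \mid y_j=0\}$ as $(n-i+1)$ minus the count of ones at positions $j\ge i$, which is exactly your three-way partition of the pairs counted by $\rho_{\bf s}({\bf y})$ combined with $w+\binom{w}{2}=\binom{w+1}{2}$. Same double-counting idea, same bookkeeping; no issues.
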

\begin{proof}
Since ${\bf y} \in \binset^n$, 
\begin{align*}
\mathrm{inv}({\bf y})
&= \# \{(i,j) \mid 1\leq i < j \leq n, y_i > y_j \}  \\
&= \# \{(i,j) \mid 1\leq i < j \leq n, (y_i, y_j)=(1,0) \}  \\
\end{align*}
holds. Set $I \coloneqq \{ i \in [n] \mid y_i=1 \}$. Then,  
we have
\begin{align*}
&\# \{(i,j) \mid 1\leq i < j \leq n, (y_i, y_j)=(1,0) \} \\
&= \sum_{i \in I} \# \{j \in [n] \mid i < j, y_j=0 \}  \\
&=\sum_{i \in I} ((n-i) -  \# \{j \in [n] \mid i < j, y_j=1 \})  \\
&=\sum_{i \in I} ((n-i+1) -  (1+\# \{j \in [n] \mid i < j, y_j=1 \}))  \\
&=\sum_{i \in I} (n-i+1) -  \sum_{i \in I} \# \{j \in [n] \mid i \leq j, y_j=1 \}  \\
&=\sum_{i \in I} y_i(n-i+1) - \frac{\# I(\# I +1)}{2}  \\
&=\rho_{\bf s}({\bf y}) - {{\mathrm{wt}({\bf y})+1} \choose 2}.
\end{align*}
\end{proof}

\begin{thm}\label{linear_function}
For each function in Flowcharts 1 or 2, 
its computational cost is $O(|{\bf y}|)$, where $|{\bf y}|$ is the length of ${\bf y}$.
\end{thm}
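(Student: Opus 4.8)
The plan is to verify the $O(|\mathbf{y}|)$ bound function-by-function for each row appearing in Table II (Flowchart 1) and Table III (Flowchart 2), exhibiting a single-pass computation for each. The key observation is that every function either (i) is a remainder or inversion computation, (ii) is a prefix/suffix-sum lookup of one of the auxiliary quantities $L^{(b)}, R^{(b)}$, or (iii) is a constant-time arithmetic expression. I would treat the three groups in that order.

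First I would dispose of the $\mathrm{remainder}$ functions. For monotone codes, $\rho_{\mathbf k}(\mathbf y)=\sum_{i} k_i y_i$ is a single linear scan over $\mathbf y$, so $(a-\rho_{\mathbf k}(\mathbf y))\bmod m$ costs $O(|\mathbf y|)$. For azinv codes, $\tau(\mathbf y)=\mathrm{inv}(\sigma^{-1}(\mathbf y))$, and here I would invoke Theorem~\ref{inversion_formula}: applying $\sigma^{-1}$ is a linear-time permutation of indices, and the formula $\mathrm{inv}(\mathbf z)=\rho_{\mathbf s}(\mathbf z)-\binom{\mathrm{wt}(\mathbf z)+1}{2}$ reduces the inversion count to one weighted sum plus a constant-time binomial, again $O(|\mathbf y|)$. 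This is precisely why Theorem~\ref{inversion_formula} was proved just before: without it, the naive inversion count is quadratic.

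Next I would handle the $\mathrm{weight}$ and $\mathrm{position}$ functions together, since both rest on the same idea. Each of $\Lzero,\Lone,\Rzero,\Rone$ (and their $\tilde{\mathbf y}$ analogues) is a prefix or suffix sum of the sequence $y_j(k_{j+1}-k_j)$ or $\overline{y}_j(k_{j+1}-k_j)$; I would precompute all prefix sums in one left-to-right scan and all suffix sums in one right-to-left scan, so that $L^{(b)}_{\mathbf k}(j,\mathbf y)$ and $R^{(b)}_{\mathbf k}(j,\mathbf y)$ become $O(1)$ table lookups for every $j$. Consequently $\mathrm{weight}=\mathrm{wt}_{\mathbf k}(\mathbf y)=R^{(1)}_{\mathbf k}(1,\mathbf y)$ is read off in $O(1)$ after the scan, and each $\mathrm{position}$ function, being a $\max$ or $\min$ over $j\in[n]$ (or $[n-1]$) of the condition that one such lookup equals a fixed target $r$, $r-w-k_1$, or $r-w-1$, is found by a single sweep over $j$, hence $O(|\mathbf y|)$. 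For Flowchart~2 the azinv $\mathrm{position}$ is literally $n-\min\{r,m-r\}$, which is constant time, and the monotone $\min J$ is one sweep over $\mathbf k$. Finally the $\mathrm{sequence}$ functions depend only on the parity of $p$ and the $\mathrm{inserted}/\mathrm{reversed}$ functions are single insertions or reversals, all $O(1)$ or $O(|\mathbf y|)$.

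The main obstacle I expect is bookkeeping rather than mathematics: ensuring that the prefix/suffix-sum precomputation genuinely covers all four index-shifted variants simultaneously, and that the dominant cost of the whole flowchart is a constant number of linear scans rather than one scan per queried index. I would make this explicit by stating that a fixed, constant number of scans (the remainder scan, one forward and one backward partial-sum scan, and one scan realizing the $\max$/$\min$) suffices, after which every individual function call is $O(1)$; summing a constant number of $O(|\mathbf y|)$ scans yields the claimed $O(|\mathbf y|)$ total. The one genuinely nontrivial ingredient is the linear-time evaluation of $\tau$, and that is already secured by Theorem~\ref{inversion_formula}, so the remaining work is the routine verification just outlined.
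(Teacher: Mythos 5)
Your proposal is correct and follows essentially the same route as the paper: the paper's proof likewise observes that $\mathrm{sequence}$ and $\mathrm{reversed}$ are constant time, that each $\mathrm{position}$ and $\mathrm{inserted}$ function needs only a single pass over $\mathbf{y}$, and that $\mathrm{remainder}$ and $\mathrm{weight}$ reduce to linear-time inner products together with the linear-time evaluation of $\mathrm{inv}$ via Theorem~\ref{inversion_formula}. Your version merely spells out the bookkeeping (prefix/suffix-sum precomputation, a constant number of scans) that the paper compresses into ``we only need to use a For loop once.''
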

\begin{proof}
Since the definitions of $\mathrm{sequence1}(p)$, $\mathrm{sequence2}(p)$, 
and $\mathrm{reversed}(p,{\bf y})$,
they can be computed in constant time.

Since we only need to use "For loop" once, 
$\mathrm{position1}(r,{\bf y},{\bf k})$, $\mathrm{position2}(r,{\bf y},{\bf k})$, 
$\mathrm{position}(r,{\bf y},{\bf k})$, $\mathrm{inserted}(p,b,{\bf y})$, and ${\bf \tilde{y}}$ 
can be computed in linear time in the length of ${\bf y}$.

Since the inner product, $\mathrm{inv}({\bf y})$, and ${\bf \tilde{y}}$ 
can be computed in linear time, 
$\mathrm{remainder}(a,m,{\bf y},{\bf k})$ and $\mathrm{weight}({\bf y},{\bf k})$ 
can be computed in linear time in the length of ${\bf y}$.
\end{proof}
The following is a corollary of Theorem \ref{linear_function}.
\begin{cor}
Algorithm 1 and Algorithm 2 are linear time algorithms in the code-length.
\end{cor}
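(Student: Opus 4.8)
The plan is to treat this Corollary as an immediate consequence of Theorem~\ref{linear_function}, combined with the purely structural observation that neither algorithm contains any loop or recursion over its own lines: each of Algorithm 1 and Algorithm 2 consists of a fixed, constant number of steps, and the control flow is nothing but a cascade of IF/ELSE branches. Consequently any single execution visits at most a constant number of lines, and it suffices to bound the cost of each line by $O(|{\bf y}|)$ and then sum a constant number of such bounds.

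First I would classify the individual lines into three types. The first type is pure control flow together with constant-time bookkeeping (the comparisons $|{\bf y}| = n$, $r = 0$, $r \le w$, and the assignments $b := 0$, $b := 1$, $b := 10$, $b := 01$, and the like); each of these is $O(1)$. The second type is the length computation in step 2, which is $O(|{\bf y}|)$. The third type consists of the calls to the functions tabulated for Flowcharts 1 and 2, namely $\mathrm{remainder}$, $\mathrm{weight}$, $\mathrm{position1}$, $\mathrm{position2}$, $\mathrm{position}$, $\mathrm{sequence1}$, $\mathrm{sequence2}$, $\mathrm{inserted}$, and $\mathrm{reversed}$, each of which is $O(|{\bf y}|)$ by Theorem~\ref{linear_function}.

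Next I would handle the membership tests that appear as IF conditions but are not separately listed in the function tables, namely $\rev_{p}({\bf y}) \in \mono$, $\ins_{p,{\bf b}}({\bf y}) \in \mono$, $\BAR_{p}({\bf y}) \in \azinv$, and $\ins_{p,{\bf b}}({\bf y}) \in \azinv$. Each such test reduces to first forming the candidate word by a single $\rev$, $\BAR$, or $\ins$ operation, all of which are $O(|{\bf y}|)$, and then checking its defining congruence. For monotone codes the check evaluates $\rho_{\bf k}$, an inner product computable in $O(n)$; for azinv codes it evaluates $\tau = \mathrm{inv} \circ \sigma^{-1}$, which is $O(n)$ because $\sigma^{-1}$ is a fixed permutation applied in linear time and $\mathrm{inv}$ is linear by Theorem~\ref{inversion_formula}. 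Hence every membership test is likewise $O(|{\bf y}|)$, and this is precisely the one place where a little care is needed and where Theorem~\ref{inversion_formula} does its work.

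Finally I would add up the costs. Since the number of executed lines is bounded by a constant and each line costs $O(|{\bf y}|)$, one run costs $O(|{\bf y}|)$. Because the admissible input length satisfies $|{\bf y}| \in \{n, n-1, n-2\}$, we have $|{\bf y}| = \Theta(n)$, so the cost is $O(n)$, i.e. linear in the code-length. I do not anticipate a genuine obstacle: the argument is essentially an accounting one, and the only subtlety is confirming that the implicitly stated membership tests are themselves linear, which the reduction to $\rho_{\bf k}$ and to $\tau$ via Theorem~\ref{inversion_formula} settles.
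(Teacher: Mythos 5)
Your proof is correct and takes essentially the same route as the paper, which obtains the corollary directly from Theorem~\ref{linear_function} together with the fact that each algorithm executes only a constant number of lines with no loops in its control flow. Your explicit treatment of the membership tests $\rev_{p}({\bf y})\in\mono$, $\ins_{p,{\bf b}}({\bf y})\in\mono$, $\BAR_{p}({\bf y})\in\azinv$, and $\ins_{p,{\bf b}}({\bf y})\in\azinv$ --- reducing each to one evaluation of $\rho_{\bf k}$ or of $\tau$ via Theorem~\ref{inversion_formula} --- fills in a detail that the paper's function tables leave implicit, and is a worthwhile addition.
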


\section{conclusion}
In this paper, we provided the single deletion/reversal error-correcting algorithm for monotone codes 
and the single BAD/BAR error-correcting algorithm for azinv codes.
Constructions of these codes are different.
However, algorithms for these codes and the proofs of Theorem II.4 and Theorem III.3 correspond to each other.

In Section \ref{comparison}, we provided the unification of these decoding algorithms 
for monotone codes and azinv codes.
The respective deletion error-correcting algorithms for monotone codes and 
azinv codes are represented by the same flowchart, and the respective reversal error-correcting algorithms 
for monotone codes and azinv codes are represented by the same flowchart.
We also showed that these algorithms are linear-time algorithms.
 
As a future work, we will consider decoding algorithms for single insertion errors 
for monotone codes and azinv codes. 
Furthermore, we will concider decoding algorithms for other deletion/reversal errors.
Monotone codes are defined by $\rho_{\bf k}({\bf x})$ and 
azinv codes are defined by $\tau(\bf x)$.
By replacing one of these functions with the other, 
we will create new codes that are capable of the other deletion error-correcting 
and the other reversal error-correcting. 
These error-correcting algorithms are expected to have the same flowcharts 
as the ones for monotone codes and azinv codes.

Moreover, since monotone codes can freely take a positive monotonic increasing integer sequence ${\bf k}$, 
it is expected to be able to add the other property to monotone codes in addition to 
the single deletion/reversal error-correctable property.
For example, it is known to be able to add properties 
of being two-deletion error-correctable\cite{helberg2002multiple} 
and easy to encode \cite{hagiwara2016ordered}.
Monotone codes are generalized by introducing parameter ${\bf k}$ into Levenshtein codes.
In the same way, the generalization with parameter ${\bf k}$ in azinv codes can be considered.
The function $\mathrm{inv}({\bf x})$ used to define azinv codes has the property of \ref{inversion_formula}.
We can generalize azinv codes by taking a positive monotonic decreasing integer sequence as ${\bf s}$ in \ref{inversion_formula}.
The generalized azinv codes are expected to be able to add properties 
of being two-BAD error-correctable and easy to encode.
In addition to these properties, there are some other similar properties in Levenshtein codes and azinv codes, such as optimality and convergence\cite{hagiwara2017perfect}.
We would like to discuss these topics in a future work for further development of our research.

\section{acknowledgement}
This paper is partially supported by KAKENHI 18H01435.

\bibliographystyle{unsrt}
\bibliography{reference}

\end{document}